\definecolor{MyDarkBlue}{rgb}{0,0.08,0.45}
\newcites{New}{ Supplemental References}
\newtheorem{theorem}{{\bf \sc Theorem}}
\renewcommand{\Pr}{\mathbb{P}}
\begin{document}

\title{Interacting Regional Policies in Containing a Disease}

\author{Arun G. Chandrasekhar\thanks{Department of Economics, Stanford University; J-PAL; NBER.}, Paul Goldsmith-Pinkham\thanks{Yale School of Management.},\\   Matthew O. Jackson\thanks{	Department of
		Economics, Stanford University; Santa Fe Institute.}, and Samuel Thau\thanks{Harvard University.
		\newline
		 We thank Abhijit Banerjee, Gabriel Carroll, Bharat Chandar, Dean Eckles, Ben Golub, Dave Holtz, Ali Jadbabaie, Ed Kaplan,  and  Johan Ugander  for helpful discussions.
		 We  gratefully acknowledge financial support from the NSF under grant SES-1629446 and
		 RAPID \# 2029880.  We thank J-PAL SA, Tithee Mukhopadhyay, Shreya Chaturvedi,
		 Vasu Chaudhary, Shobitha Cherian, Arnesh Chowdhury,   Anoop Singh Rawat, and Meghna Yadav for research assistance. The computations in this paper were run on the FASRC Cannon cluster supported by the FAS Division of Science Research Computing Group at Harvard University.}}
\date{January 2021}

\maketitle
\begin{abstract}
Regional quarantine policies, in which a portion of a population surrounding infections are locked down,
are an important tool to contain disease.  However, jurisdictional governments -- such as cities,
counties, states, and countries -- act with minimal coordination across borders.
We show that a regional
quarantine policy's effectiveness depends upon whether (i) the network of interactions satisfies a
balanced-growth condition, (ii) infections have a short
delay in detection, and (iii) the government has control over and knowledge of the necessary
parts of the network (no leakage of behaviors).
As these conditions generally fail to be satisfied, especially when interactions cross borders,
we show that substantial improvements are possible if governments are outward-looking and proactive:
triggering quarantines in reaction to neighbors' infection rates, in some cases even before infections are detected internally.
We also show that even a few lax governments -- those that wait for nontrivial internal infection rates before quarantining -- impose substantial costs on the whole system.  Our results
illustrate the importance of understanding contagion across policy borders
and offer a starting point in designing proactive policies for decentralized jurisdictions.
\end{abstract}


\thispagestyle{empty}

\setcounter{page}{0} \newpage

\section*{Introduction}
Global problems, from climate change to financial crises to disease control, are hard to address without policy coordination across borders.   Carbon emissions in one region are everyone's problem, as are financial collapses, as well as the spread of an infectious disease. 
Coordinating policies across jurisdictions both in terms of timing and scale are important whenever problems have spillovers.  In this paper we shed light on this problem
by examining how different types of decentralized policies fare compared to
more centralized policies at containing the spread of an infectious disease.

In particular,
pandemics, like COVID-19, are challenging to contain if
governments fail to coordinate efforts. Without vaccines or herd immunity,
governments have responded to infections by limiting constituents'
interactions in areas where an outbreak exceeds a threshold of
infections.  Such regional quarantine policies are used by towns,
cities, counties, states, and countries, and trace to the days of the
black plague.  Over the past 150 years, regional quarantines have been
used to combat cholera, diphtheria, typhoid, flus, polio, ebola, and
COVID-19
\cite{hardy1993cholera,gensini2004concept,tognotti2013lessons,drazen2014ebola},
but rarely with coordination across borders.

Decentralized policies across jurisdictions have two major
shortcomings.  First, governments care primarily about their own
citizens and do not account for how their infections impact other
jurisdictions: the resulting lack of coordination can lead to worse
overall outcomes than a global policy
\cite{jacksonl2013,holtz2020interdependence,cheng2020variations}.
Second, many governments are inward-looking,  only paying attention to internal situations, which leads them to under-forecast their own infection
rates.

We examine three types of quarantine policies to understand the impact
of non-coordination: (i) those controlled by one actor with control of
the whole society -- ``\emph{single regime policies},'' (ii) those controlled
by separate jurisdictions that are inward-looking and only react to internal infection rates, or ``{\sl reactive}'' for short and (iii) those controlled by separate jurisdictions that are outward-looking, tracking infections outside of their jurisdiction as well as within to forecast their infection rates when deciding when to quarantine, or ``{\sl proactive}'' for short.

We use a general model of contagion through a network to study these
policies.  We first consider single regime policies.  A government
can quarantine everyone at once under a ``global quarantine,'' but
those are very costly (e.g., lost days of work, school, etc.).  Less costly (in the short run),
and hence more common, alternatives are
``regional quarantines'' in which only people within some distance of
observed infections are quarantined.  Regional quarantines, however, face two
challenges.  First, many diseases are difficult to detect,
because some individuals are either asymptomatically contagious (e.g., HIV,
COVID-19)
\cite{bridges2003transmission,ten2018contributions,bai2020presumed}, 
or a government lacks resources to quickly identify
infections \cite{wu2020characteristics,lazer2020stateofnation}.
Second, it may be infeasible to fully quarantine a part of
the network, because of difficulties in identifying whom to
quarantine  (e.g., imperfect or inefficient contact tracing)
or non-compliance by some people -- by choice or necessity
\cite{parmet2020covid,wilder2020can,ghani1997role,jolly2002gonorrhoea,halloran2002containing,keeling2005networks}
. Either way, tiny leakages can spread the disease.

We show that regional quarantines curb the spread of a disease if and
only if: (i) there is limited delay in observing infections, (ii)
there is sufficient knowledge and control of the network to prevent
leakage of infection, and (iii) the network has a certain
``balanced-growth'' structure.  The failure of any of these conditions
substantially limits regional quarantine effectiveness.

We then examine jurisdictional policies, which are regional quarantine
policies conducted by multiple, uncoordinated regimes.
The regions that need to be quarantined, however, often cross borders, leading to
leakage that limits their
effectiveness.  As we show, jurisdictional policies that are reactive do much worse than
proactive ones, as they do not forecast the impact of neighboring jurisdictions'
infection rates on their own population.  Moreover, a few lax
jurisdictions, which wait for higher infection rates before
quarantining, worsen outcomes for all jurisdictions.

\section*{A Model}

Consider a large network of $n$ individuals or nodes.  Our theory is asymptotic, stating properties that apply with a probability approaching one as the population grows ($n\rightarrow \infty$). Our theoretical results consider sequences of networks as $n$ grows, while the simulations are on given networks with thousands of nodes. 

An infectious disease begins with an infection of a
node $i_0$, the location of which is known, and expands via (directed) paths from $i_0$. In each discrete time period, the infection spreads from each currently infected node to each of its susceptible contacts
independently with probability $p$. A node is infectious for $\theta$ periods, after which it
recovers and is no longer susceptible or contagious, though our results extend to the case in
which a node can become susceptible again.

The disease may exhibit a delay of $\tau \leq \theta$ periods during which an infected and contagious person
does not test positive. This can be a period of asymptomatic infectiousness, a delay in testing,
or limits to healthcare access \cite{bridges2003transmission,bai2020presumed,lauer2020incubation,wu2020characteristics,lazer2020stateofnation,banerjee2020messages}. 
After that delay, each infected node's infection is detected
with probability $\alpha<1$ (for simplicity, in the first period after the delay).
$\alpha$ incorporates testing accuracy, availability, and decisions to test.

This framework nests the susceptible-infected-recovered (SIR) model
and its variations including exposure, multiple infectious stages, and death \cite{kermack1927contribution,bailey1957mathematical,anderson1992infectious,keeling2005networks,mcadams2020}, agent-based models \cite{newman1999scaling,newman2002spread,flaxman2020estimating}, and others.

\section*{Results}
\subsection*{Baseline: A Single Jurisdiction with Complete Control}

We begin by analyzing a single jurisdiction with complete control, or equivalently, the entire network being one jurisdiction with one policymaker.

A $(k,x)$-regional policy is triggered once
$x$ or more infections are observed within distance $k$ from the seed node $i_0$, at which point it
quarantines all nodes within distance $k+1$ of the seed
for $\theta$ periods.
This captures a commonly used policy where regions that are exposed to the disease are
shut down in response to detection.
We begin by giving the policymaker the advantage of knowing which nodes are within distance $k+1$ of the seed,
which could reflect rapid and efficient contact tracing supplemented with rich network data.
We later explore how errors in this knowledge change the results.
We also give the policymaker knowledge of which
node is the seed and study subsequent containment efforts.
In practice, policymakers must estimate the origin of infection, which presents an additional challenge.

Whether a regional policy halts infection is fully characterized by whether the sequence of networks satisfies what we call \emph{growth-balance}. This requires that as the networks become large, all paths along which the disease might escape beyond the regional quarantine of distance $k+1$ are such that at least some nodes have many neighbors. In particular, for the sequence of networks to be growth-balanced with respect to $k$, there must exist $m(n)\rightarrow \infty$ such that in a network with $n$ nodes, every path leading from $i_0$ to a node at distance $k+2$ has at least one node with degree at least $m(n)$.  This condition ensures that if an infection does spread along some path that can take it outside of the region then, at some point along the way, it is very likely to infect many nodes and thus be detected before it reaches the edge of the region.

To better understand growth-balance, consider an example of a disease that is
beginning to spread with a reproduction number $R_0$ of 3.5 and
such that one in ten cases are detected in a timely manner $(\alpha = 0.1)$.
First, consider a part of the network in which
each infected person infects 3.5 others on average.
If we monitor all nodes within distance $k=3$ of an infected node,
a ``typical'' path of infection would lead to roughly $3.5+3.5^2+3.5^3= 58.625$ expected cases before it reaches the edge of the region.
The chance that this goes undetected is tiny: $0.9^{58.625} = 0.002$.
In contrast, suppose the infection starts in a part of the network where each infected person infects
just one other, on average, so that the local reproduction number here is $R_0 = 1$ rather than 3.5.
Now a path of length 3 leads to $1+1+1=3$ (expected) infections.
The chance that such a spread remains undetected is very high: $0.9^3=0.72$.

Speaking loosely,  many different networks can lead to the same average reproduction number, but have very different structures.
If the distribution of reproduction numbers around the network has no pockets in which they are too low -- i.e., if the growth structure of the disease around the network is well ``balanced'' and not too low --  then
it is highly likely that any early infection will be detected before it gets too far from the first infected node.
If instead, the distribution of reproduction numbers gives a
nontrivial chance that the disease starts out on a path with all low reproduction numbers,
like the 1, 1, 1, path,  then there is a high chance that it can travel far from the seed
before being detected.
This highlights the fact that a reproduction number $R_0$ alone is a crude concept,
and that the specifics of the network structure matter considerably for whether a disease spreads or is containable.
In particular, areas with low $R_0$ (but above one) can lead to more containment failures and
lead to broader infections.
Given the short distances in many networks \cite{watts1998collective,amaral2000classes,chung2002average},
a lack of growth balance allows a disease to spread far before detection.
Figure S1 in the supplementary materials pictures a network
that has a high average reproduction number,
but is not growth-balanced and allows the infection to travel far from the initially infected node without detection.

In Theorem 1 in the Supplementary Information we prove that,  with no delays in detection and no leakage,
a $(k,x)$-regional policy halts infection among all nodes beyond distance $k+1$ from $i_0$
with probability approaching 1 (as the population grows) {\sl if and only if} the sequence of networks
satisfies growth-balance with respect to $k$.   In fact, we prove a stronger version in which the quarantine distance $k(n)$, average degree $d(n)$, transmission probability $p(n)$, delay $\theta(n)$, and detection probability $\alpha(n)$ are all allowed to vary with $n$.

Growth-balance is satisfied by some, but not all, sequences of prominent random graph models, provided that the average degrees $d(n)$ satisfy $d(n)^{k(n)}\rightarrow\infty$ (
see SI).
However, the additional heterogeneity in human contact networks makes the property unlikely to hold in real networks even if average connectivity is high.
Indeed, if contact networks have some low degree nodes (as they tend to empirically),
then, unless quarantine
regions are large ($k(n)$ grows sufficiently as $n$ grows), growth-balance fails and a regional quarantine will be ineffective at halting a spread.

Next, we show that the effectiveness of a regional policy breaks down, even if a network is growth-balanced, once there is sufficient delay in detection or leakage (due to imperfect information, enforcement, or jurisdictional boundaries).

\subsection*{Delays in Detection and Wider Quarantines}

To understand how delays in detection affect a regional policy,
consider two extremes. If the delay is short relative to the
infectious period, the policymaker can still anticipate the disease
and adjust by simply enlarging the area of the quarantine to include a
buffer.
An easy extension of the above theorem is that a regional
policy with a buffer works if and only if the sequence of networks is
growth-balanced and the delay in detection plus $k+1$ is shorter than the
diameter of the network (Supplementary Information, Theorem 2).
Given that real-world networks have short average distances between
nodes \cite{watts2004small}, this condition can be even harder to satisfy and
non-trivial delays in detection
allow the disease to escape a regional quarantine.

\subsection*{Leakage}

Next, we consider how leakage  --
the inability to limit interactions \cite{parmet2020covid} or mistakes in identifying the portion of the network and nodes to quarantine \cite{halloran2002containing,keeling2005networks} -- diminishes the effectiveness of regional policies.
Although minimizing leakage increases the chance that a regional quarantine will be successful,
we show that even a small amount of leakage leads to a nontrivial
probability that the disease will escape the quarantine.  In particular,
we show that if even a small share $\varepsilon>0$ of nodes within distance $k$ of the seed $i_0$ ignore the quarantine and are connected to nodes outside of the quarantine, then the policy will fail to halt the spread with a probability bounded away from 0 as $n$ grows (Supplementary Information, part 2 of Theorem 3).
The result also highlights a tension in containment strategies:
the infection is easier to detect when there are many nodes and interactions within the potential
quarantine radius; but there is also more leakage
and a higher chance that the infection escapes the quarantine.

\subsection*{Jurisdictions and Leakage}

We use the theory results as a starting point to understand jurisdictional policies. It is important to note that the results on leakage (Theorem 3) apply when interactions cross jurisdictions. To show this, Figure \ref{fig:jurisdiction_example} displays two jurisdictions that fail to nicely tessellate the network: geographic location and network distance from the seed are not perfectly aligned. Therefore, a quarantine in one region will necessarily have leakage, missing nodes that interact across jurisdictions. Given leakage across borders, unless policies are coordinated across jurisdictions, our theoretical results indicate that they will fail to contain infections, which is then the starting motivation of the simulations. 

\begin{figure*}
	\centering
	\textbf{Figure \ref{fig:jurisdiction_example}: Inconsistency of Jurisdictions and Distances}\par\medskip
	\subfloat[Jurisdictions with interactions that do not align\label{subfig-1:example}]
	{%
		\includegraphics[width=0.48\linewidth]{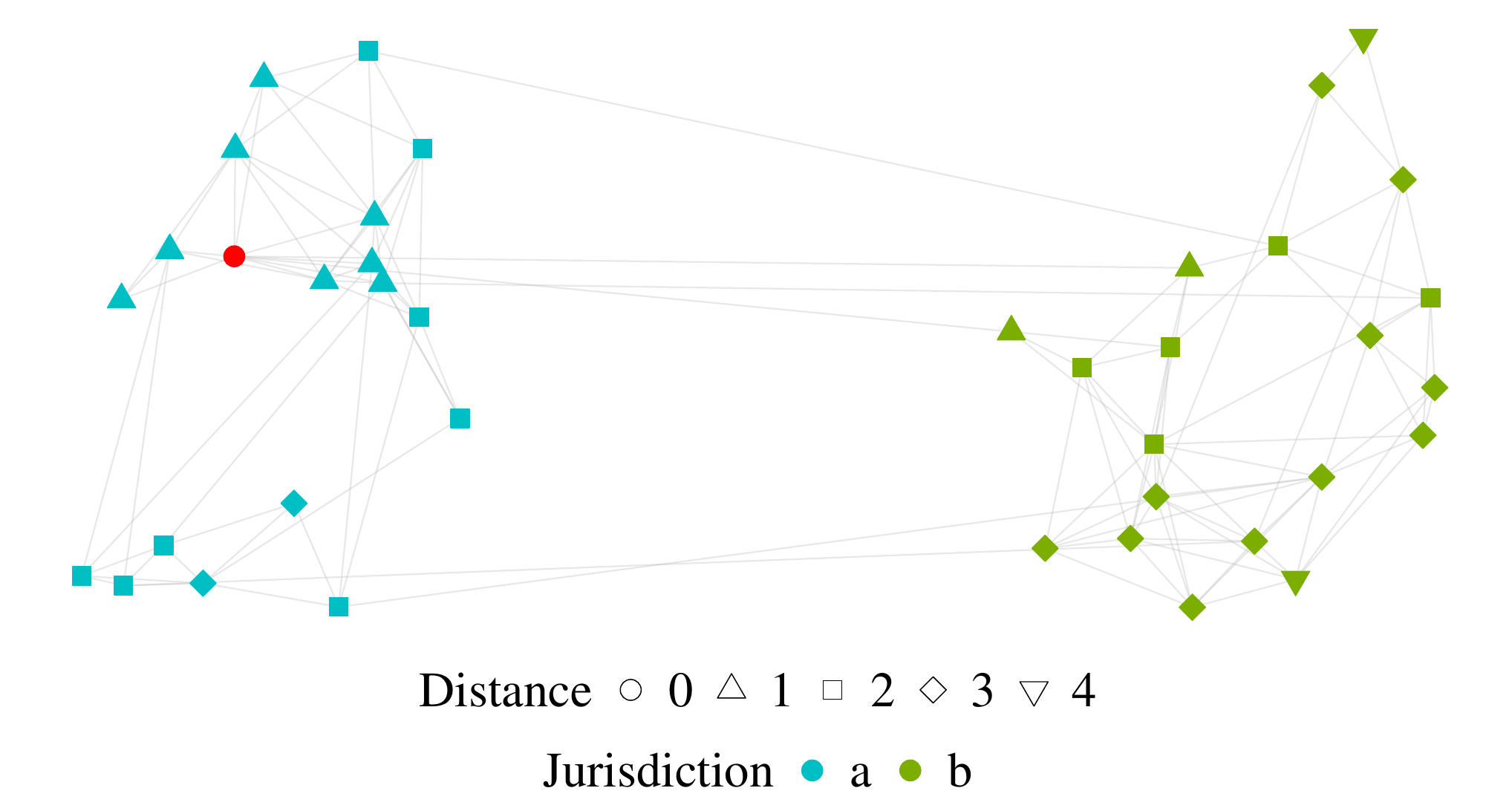}
	}
	\medskip
	\hfill
	\subfloat[Figure \ref{subfig-1:example}, but based on distance from infection \label{subfig-2:example}]{%
		\includegraphics[width=0.48\linewidth]{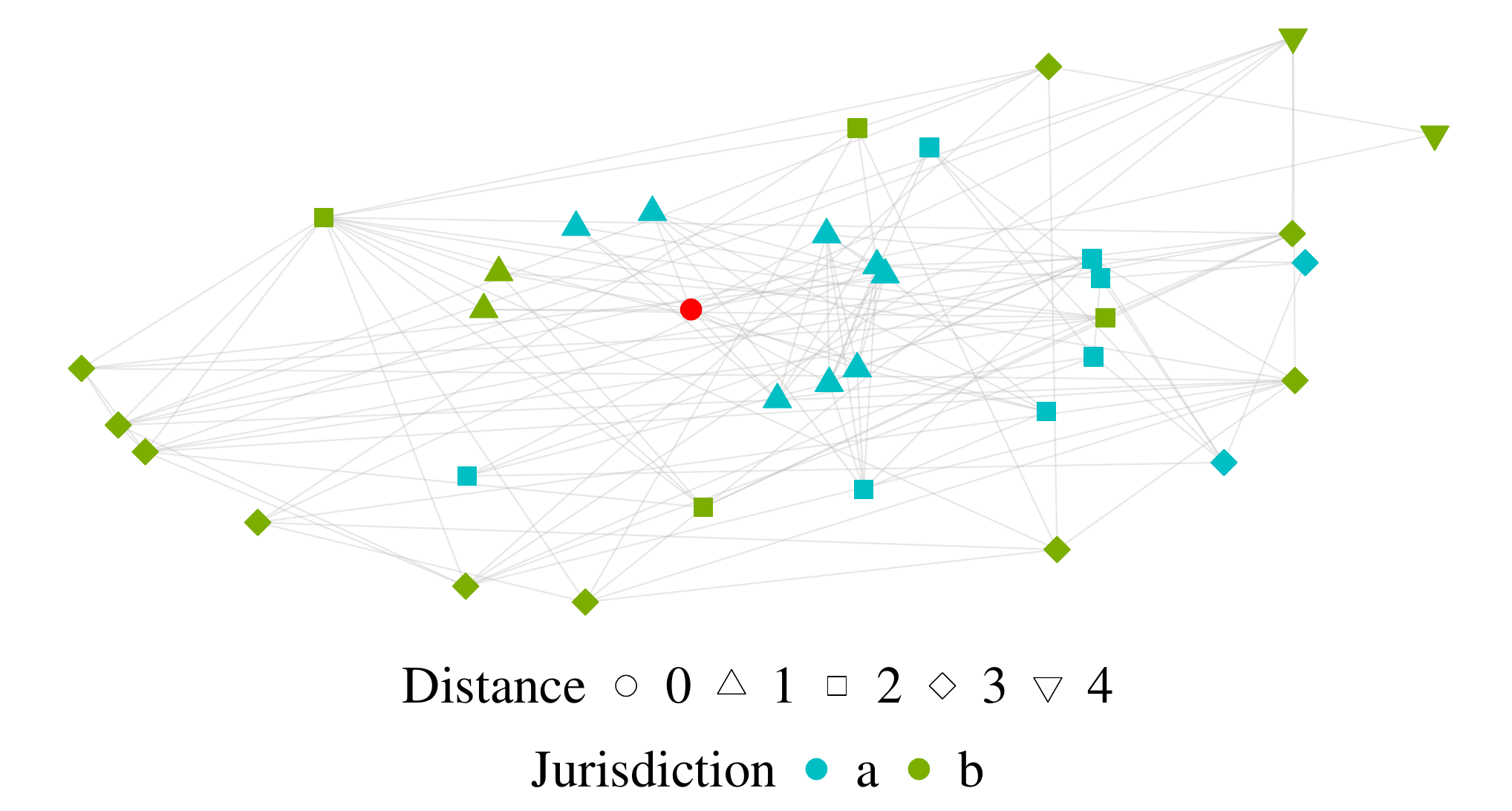}
	}
	\caption{ Nodes in two jurisdictions do not align with the distances from the initial infection. In Panel (a), the nodes are presented in a geographic sense, within their jurisdictions, and the interaction network does not comply with the jurisdictional boundaries. In Panel B, we show the network as a function of directed distance from the initial infection.  A coordinated quarantine of distance 2 over the network in Panel (b) could contain the infection; however, if it is only executed by the infected node’s jurisdiction in panel (a) then it would fail for cross-jurisdictional connections.}
	\label{fig:jurisdiction_example}
\end{figure*}

\subsection*{Simulations}

The theory shows that for most natural settings, anything short of a global quarantine is unlikely to contain the disease.  Thus, it becomes important to understand how well different policies do at curbing the number of infections over time.  In particular, we next study --- via simulations --- how stylized versions of the containment policies that are used in practice fare in terms of minimizing infections over time, {\sl and} at what costs in terms of person-days of quarantine. 

To explore this, we simulate a contagion on a network of 140000 nodes that mimics real-world
data \cite{mccormick2010many,banerjee2018changes,beaman2018can,banerjee2020messages}.
These simulations illustrate our theoretical results and also show the improvements that
proactive policies provide relative to reactive ones.  The SI presents the robustness of the results to some variations of parameters.

The network is divided into 40 \emph{locations}, each with a population of 3500.
We generate the network using a geographic stochastic block model. The probability of interacting
declines with distance. The average degree is 20.49 and nodes have 79.08\% of their
interactions within their own locations and 20.92\% outside of their location (calibrated to data
from India and the United States, including data collected during COVID-19
\cite{mccormick2010many,banerjee2018changes,beaman2018can,banerjee2020messages, SI}). We fix this network, and use it for all simulations. 

 We conduct 10000 simulations of each policy and then take the average over the simulations, with each simulation using an infection seed selected uniformly at random. The simulations progress in four stages: first, any node that has been infected for exactly $\tau$ periods is detected with probability $\alpha$; next, policy makers use the information they have about detected infections to decide whether to enact a quarantines (if one is not already in place in their jurisdiction); third, the disease progresses and currently infected people can infect their neighbors and people who have been infected for $\theta$ periods recover; finally, quarantines can end and new quarantines are implemented. We set the rate that a node infects its neighbors in order to get a basic reproduction rate of $R_0 = 3.5$ (to mimic COVID-19 \cite{hao2020reconstruction}),
and we set $\theta = 5$, $\tau=3$ (when used) and $\alpha = 0.1$
\cite{lauer2020incubation,hortaccsu2020estimating,li2020substantial, SI}.

The simulated network is fairly symmetric in degree and therefore approximates satisfying growth-balance. Thus, the attention in our simulations is focused on leakage across jurisdictions and detection delay.

Before introducing jurisdictions, we first illustrate the effects of leakage as well as delays in detection on a regional policy.
In Figure \ref{fig:theorem_conditions}, the entire network is governed by a single policymaker using a
$(k,x)=(3,1)$-regional quarantine. The policy maker eventually knows the location of the infection seed, so that it can properly center the quarantine, but does not detect the initial infection. This is meant to emulate the difficulties of finding an initial infection in real time, but we give the policy maker the advantage of being able to trace back to the epicenter and center the quarantine once they decide to enact a quarantine. As an addition to the policy, if the initial $(k,x)$ quarantine fails to contain the disease, the policy maker treats detected infected people outside of quarantine as new seeds and quarantines all nodes within distance $k+1$ of them. In the SI, we include simulations that relax the assumption that the policy maker knows the location of the original seed $i_0$, along with variants of the transmission and detection parameters -- varying $\theta$, $\tau$, $\alpha$, and $R_0$. 

Figure \ref{subfig-1:theorem} shows the outcomes for no delay in
detection nor any leakage.  Consistent with Theorem 1
the policy is effective: on average 277 people per million are infected
(0.028\% of the population), with 803956 person-days of quarantine per million people.
Figure \ref{subfig-2:theorem} introduces a delay in
detection.
With a delay of $\tau=3$, infections increase, with 2256 people per million
eventually infected (0.23\% of the population) and 2301414 person-days of quarantine per million
people. Adding a buffer to correspond to the detection delay
effectively makes the regional policy global, as the buffered region
contains 99.98\% of the population on average.  Figure \ref{subfig-4:theorem}
adds leakage to the setup of Figure \ref{subfig-2:theorem}, by having
5\% of people never quarantine.
The number of cumulative infections per million people increases to
5138 (0.50\% of the population).  The leakage increases the
number of quarantined person-days to 6478055 per million nodes.

\begin{figure*}
	\centering
	\textbf{Figure \ref{fig:theorem_conditions}:  The Impact of Detection Delay and Leakage}\par\medskip
	\subfloat[$(k,x)=(3,1)$-quarantine with no delay in detection and no leakage \label{subfig-1:theorem}]{%
		\includegraphics[width=0.48\linewidth]{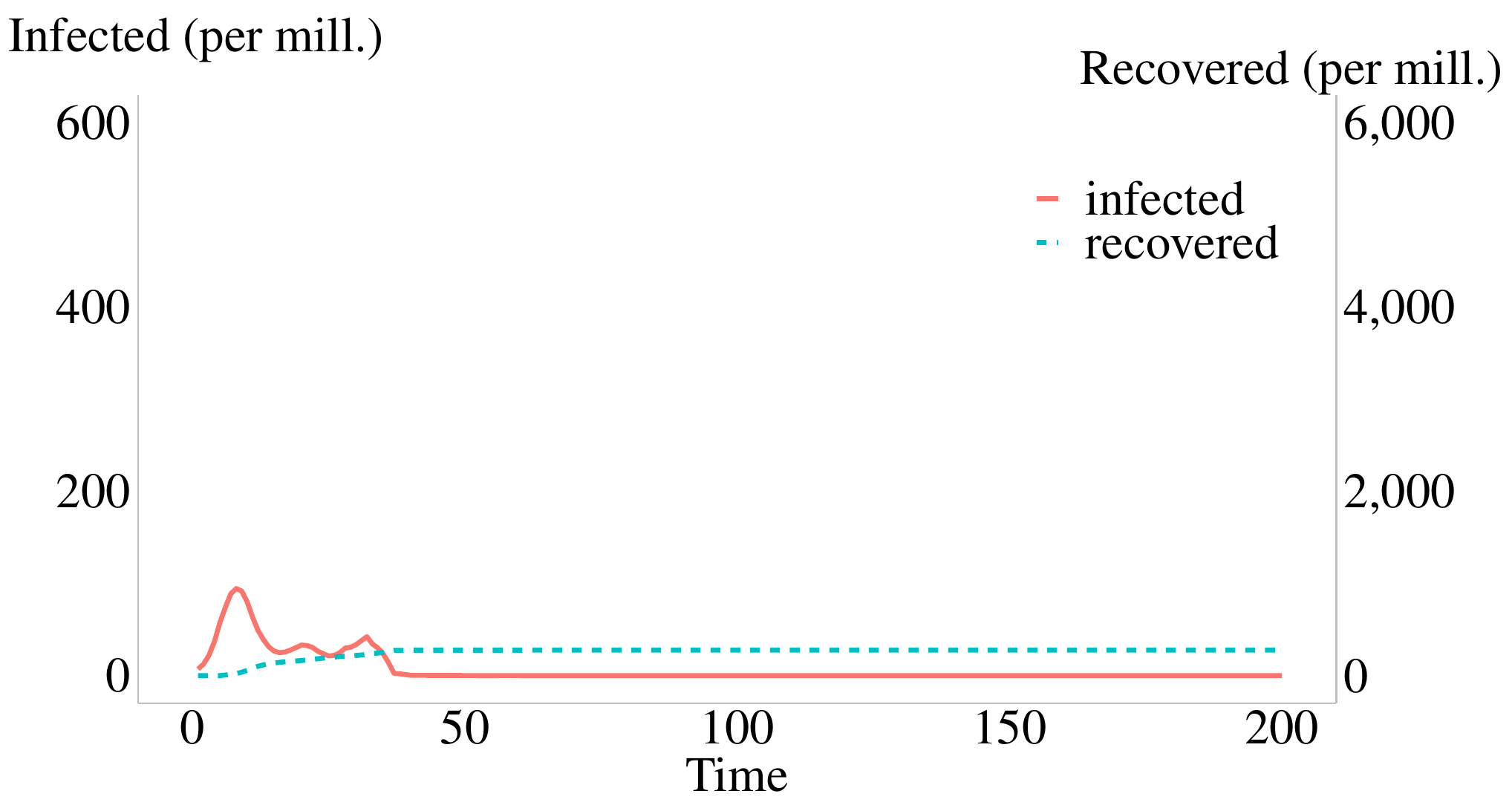}
	}\\
	\subfloat[$(k,x)=(3,1)$-quarantine policy with a detection delay of 3 periods\label{subfig-2:theorem}]{%
		\includegraphics[width=0.48\linewidth]{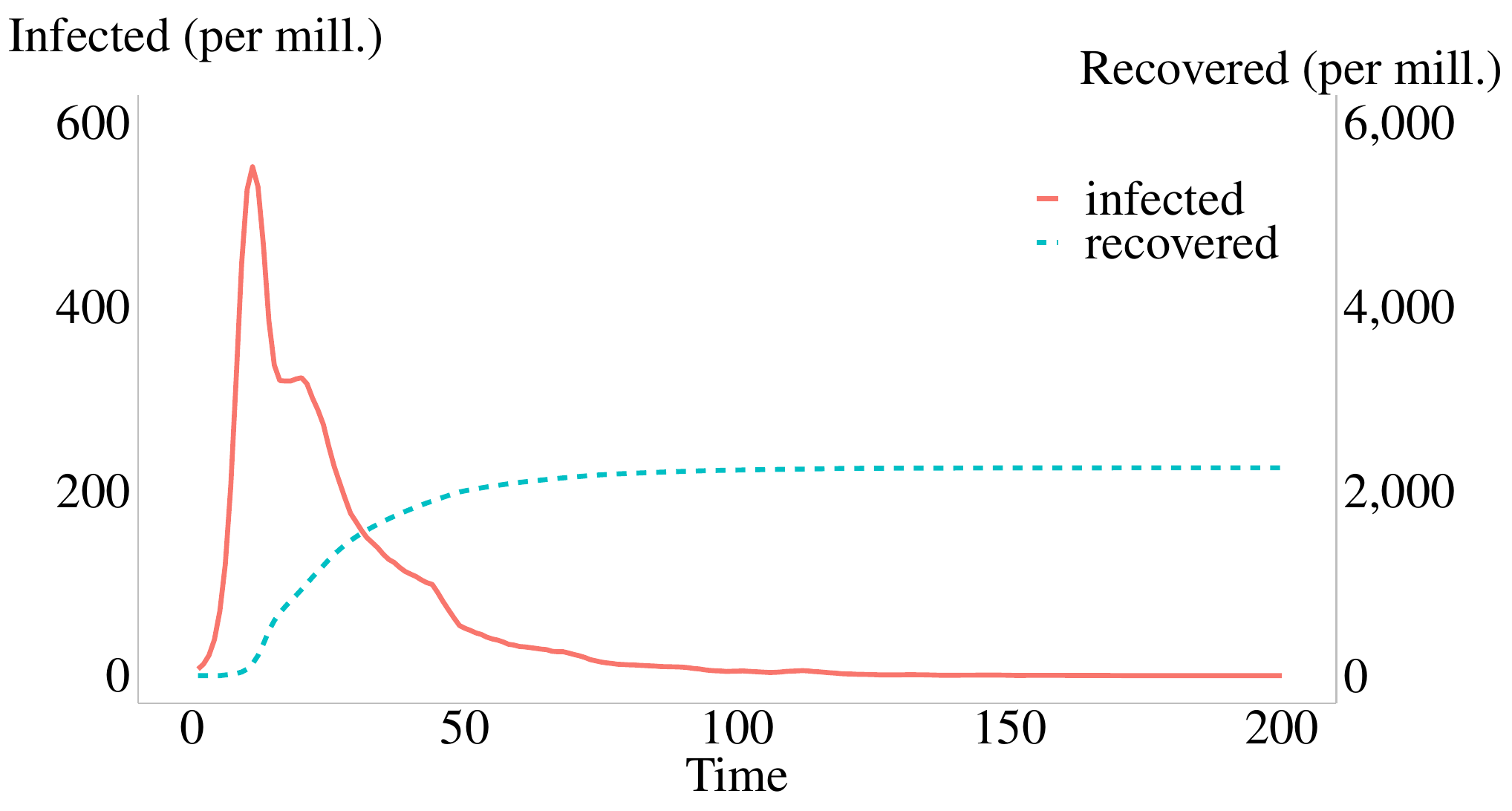}
	}
	\hfill
	\subfloat[$(k,x)=(3,1)$-quarantine policy with a detection delay of 3 periods and leakage\label{subfig-4:theorem}]{%
		\includegraphics[width=0.48\linewidth]{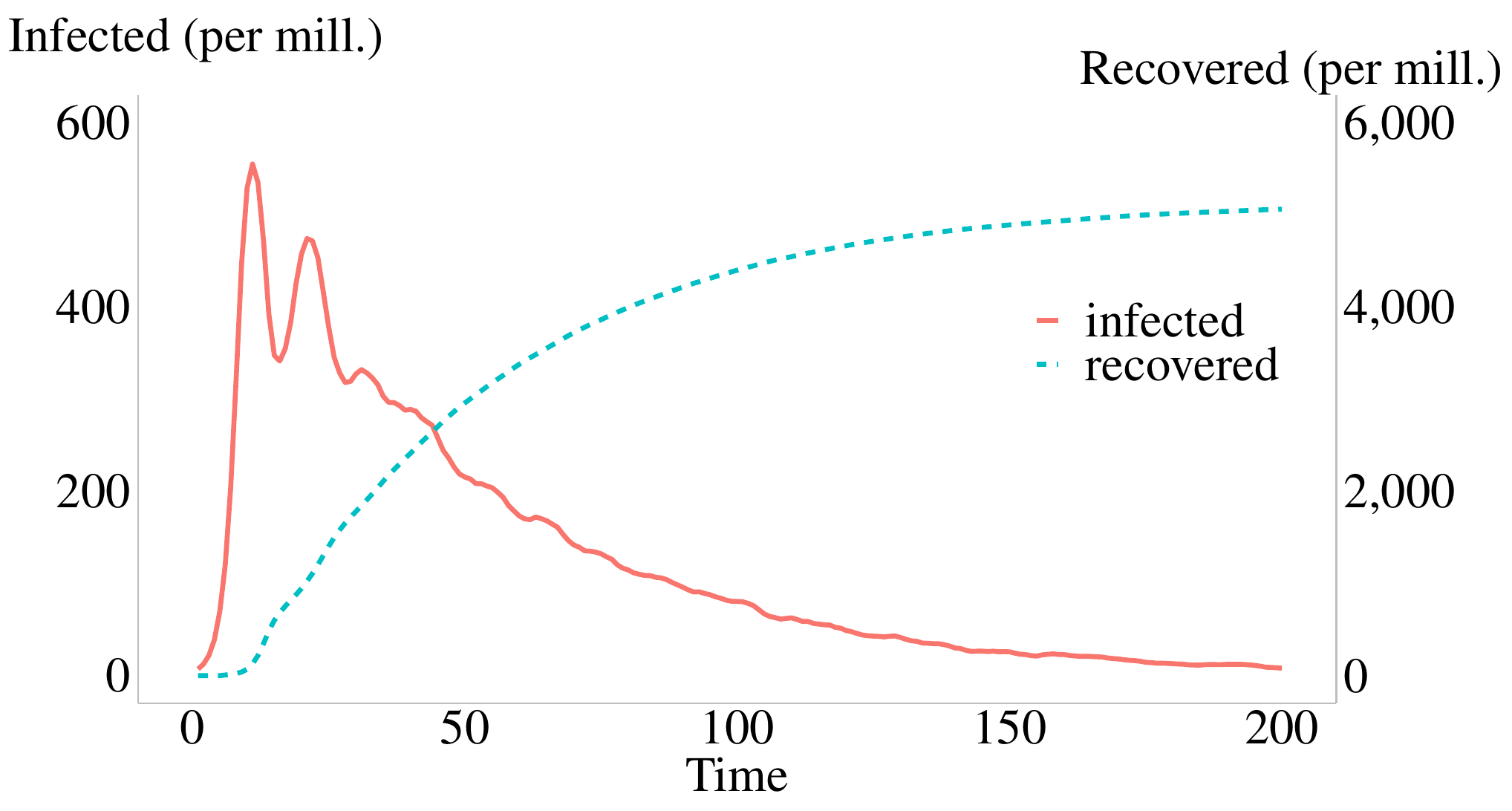}
	}
	\caption{
		We picture daily infections and cumulative recoveries under three scenarios.
		The entire network is governed by a single policymaker using a
		$(k,x)=(3,1)$-regional quarantine.  In Panel \ref{subfig-1:theorem}, there is no detection delay and no leakage.
		In Panel \ref{subfig-2:theorem}, we introduce a detection delay of $\tau = 3$.
		This represents the 3 day pre-symptomatic window during which an infected node can transmit,
		as well as an expected delay in seeking healthcare and testing upon symptom
		onset \cite{lauer2020incubation, SI}. Panel \ref{subfig-4:theorem} adds leakage
		to the setup of Panel \ref{subfig-2:theorem}, by having a randomly selected fraction $\epsilon=0.05$ never quarantine. For each figure, we simulate 10000 times on the same network
		with random initial infections, and
		present the average number of infections and recovered people over time, scaled per million. }
	\label{fig:theorem_conditions}
\end{figure*}

\subsection*{Jurisdictional Policies}

We now introduce jurisdictions to the same network as before, and each of the 40 locations in the network becomes its own jurisdiction.

We compare two types of jurisdictional policies.
In reactive policies, each jurisdiction decides on when it quarantines based entirely on internal infections. If a jurisdiction has a threshold of $x$ cases, and observes at least $x$ cases within their jurisdiction, the jurisdiction goes into quarantine.
In proactive policies, jurisdictions track infections in other jurisdictions and
predict their own  --  possibly undetected -- infections and base their
quarantines off of predicted infections. Intuitively, jurisdictions are constantly estimating infection rates (including undetected cases) in each jurisdiction based on the history of observed infections, using knowledge of the interaction rates within and across borders and the infection and latency properties of the disease. More specifically, at each time step $t$, the number of estimated infections at time $t$ is the sum of the estimated infections at $t-1$, plus the expected number of new estimated infections minus the number of expected recoveries. The expected number of new infections is calculated using the connection rates to non-quarantined jurisdictions (including own jurisdiction) and the estimated infections in those jurisdictions at $t-1$.  If at any point it is clear that the actual number of detected infections in some jurisdiction is above the estimated rate, then the estimation is updated. All jurisdictions begin by estimating that there are no infections, until at least one infection is observed. Details of this calculation are in the Supplementary Information. Proactive jurisdictions quarantine if they infer (or observe) at least $x$ cases. 

We set $x=1$ for both the reactive and proactive simulations unless otherwise specified. For both reactive and proactive jurisdiction, when a jurisdiction enters quarantine, all connections to and within the jurisdiction are severed for $\theta$ periods. As before, we set $R_0=3.5$, $\theta=5$, $\tau=3$, and $\alpha =0.1$. The policy maker does not detect $i_0$ for both the reactive and proactive policies, but does know its location when setting the quarantine. 

Figure \ref{fig:naive_vs_proactive} illustrates the improvement that proactive jurisdictional policies offer relative to reactive
jurisdictional policies. In Figure \ref{subfig-1:adaptive}, jurisdictions use reactive policies,
while in Figure \ref{subfig-2:adaptive} jurisdictions use proactive policies. In the reactive case,
there are 298911 infections per million people (28.89\% of the population),
with 131303638 person-days of quarantine
per million nodes. 
Proactive quarantining dramatically improves outcomes (Figure \ref{subfig-2:adaptive}): only
17105 people per million are infected (1.71\% of the population),
with 51328755 person-days of quarantine
per million people.  

\subsection*{Lax Jurisdictions}

Finally, we also add four ``lax'' jurisdictions to the setting.
These are jurisdictions that are reactive and have a high threshold
of internal infections before quarantining, using a threshold of $x=5$.
We examine how these few lax jurisdictions worsen the outcomes for all jurisdictions.
Figure \ref{subfig-3:adaptive} shows the outcomes when the remaining 36 jurisdictions use
reactive policies, while in Figure \ref{subfig-4:adaptive}
the remaining 36 jurisdictions using proactive policies.
Comparing Figure \ref{subfig-1:adaptive} to
\ref{subfig-3:adaptive}, infections are worse under the reactive
policies. There are 340587 people per million 
infected (34.1\% of the population), compared to 298911 (29.9\%) without the lax jurisdictions. 
Comparing this change to Figures
\ref{subfig-2:adaptive} and \ref{subfig-4:adaptive} shows that things
deteriorate {\sl relatively} more for the proactive jurisdictional policies.  The 91887
total infections per million people (9.18\% of the population) is a larger increase from 17105 (1.17\%) without lax jurisdictions.  Nonetheless, even with lax jurisdictions, the proactive policies fare better than the reactive policies (even if those do not have lax jurisdictions).


\begin{figure*}
	\centering
	\textbf{Figure \ref{fig:naive_vs_proactive}: The Effectiveness of Reactive vs Proactive Quarantines}\par
	\subfloat[Each jurisdiction quarantines once it observes any infections internally, ignores other jurisdictions \label{subfig-1:adaptive}]{%
		\includegraphics[width=0.4\linewidth]{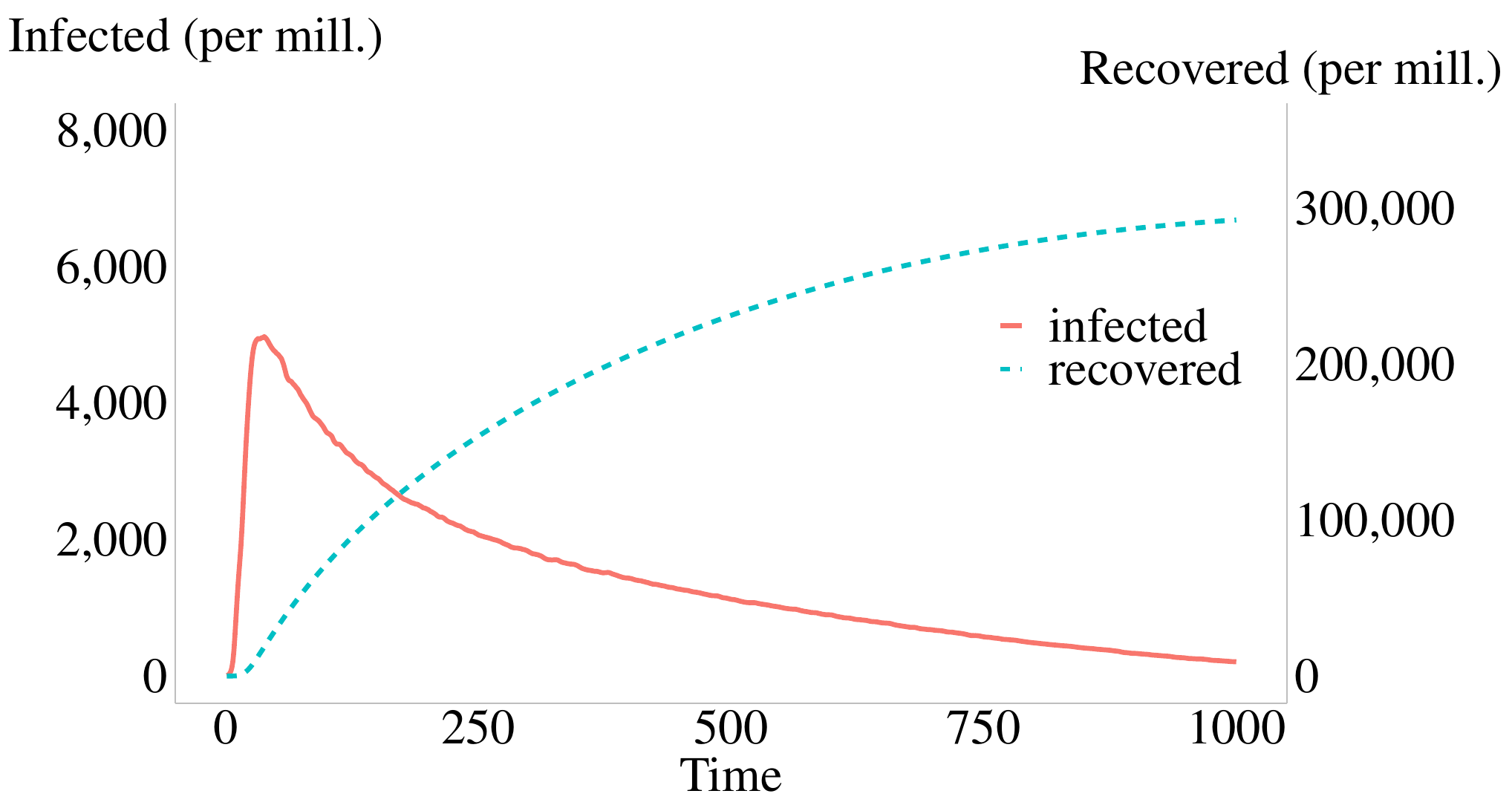}
	}
	\hfill
	\subfloat[Each jurisdiction proactively quarantines by estimating internal infections based on observation of
	other jurisdictions\label{subfig-2:adaptive}]{%
		\includegraphics[width=0.4\linewidth]{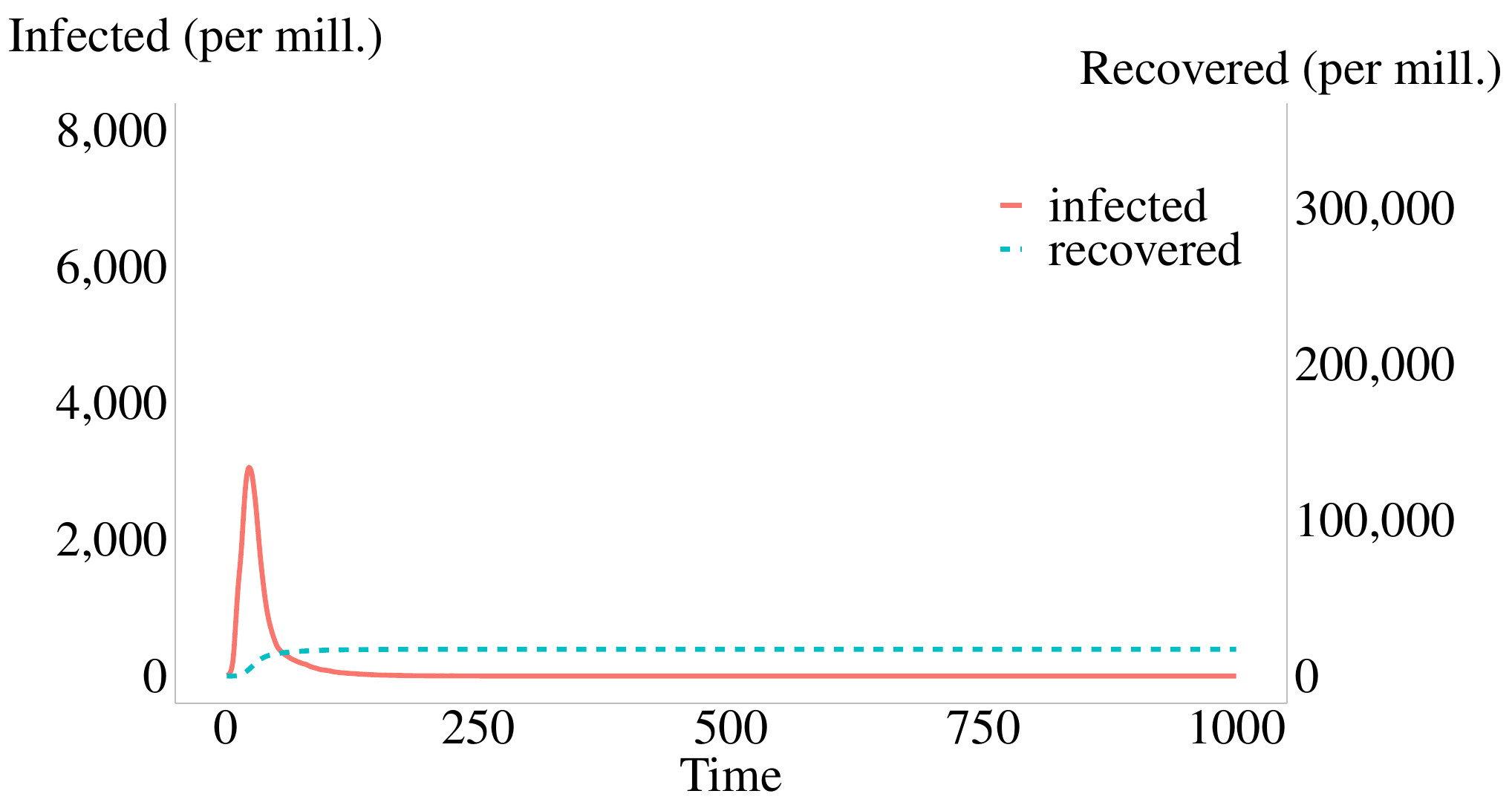}
	}\\
	\subfloat[36 jurisdictions quarantine once observing any internal infections, 4 lax jurisdictions only quarantine once they reach 5 internal infections\label{subfig-3:adaptive}]{%
		\includegraphics[width=0.4\linewidth]{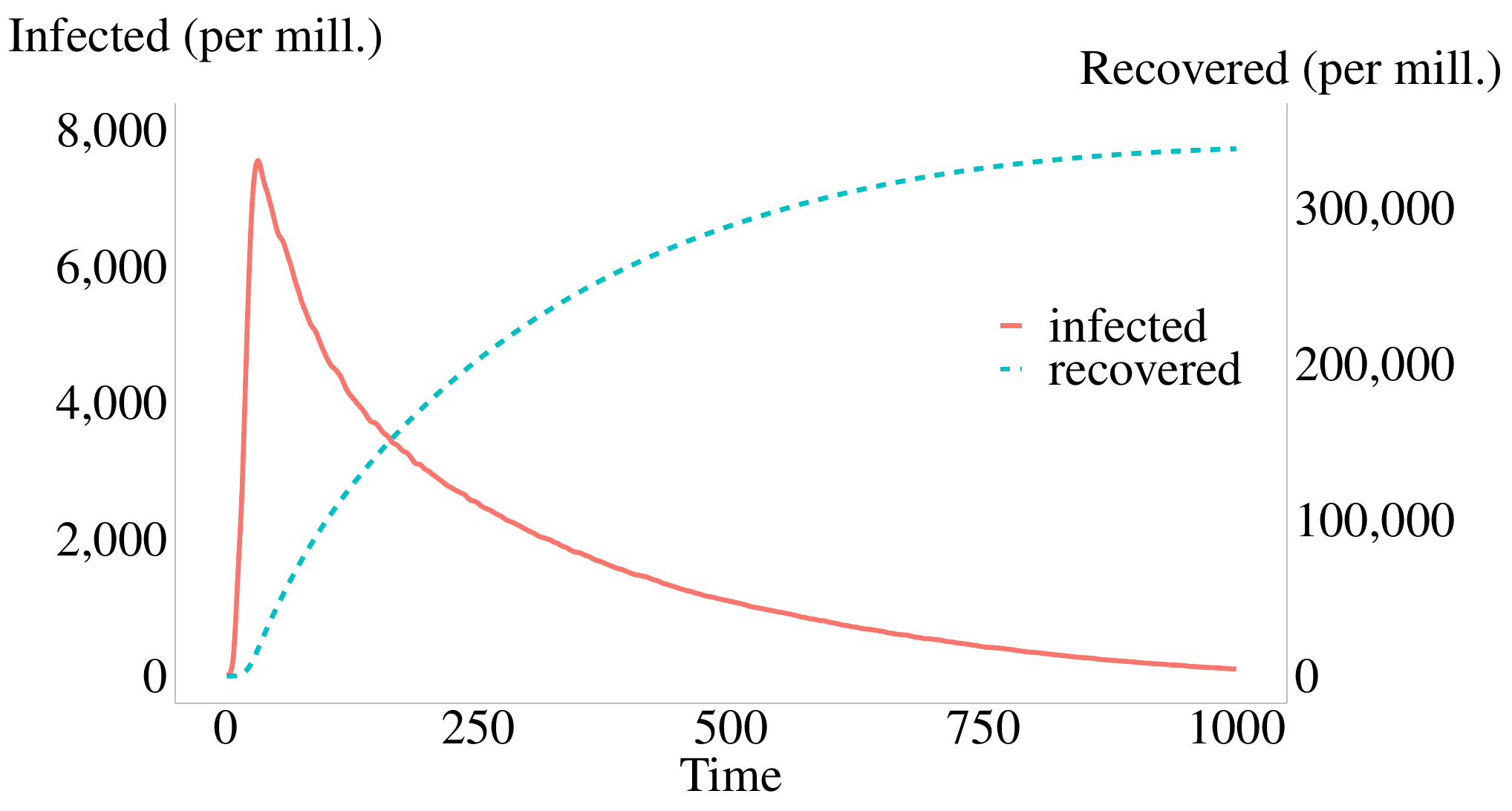}
	}
	\hfill
	\subfloat[36 jurisdictions proactively quarantine by estimating internal infections based on observation of
	other jurisdictions, 4 lax jurisdictions only quarantine once they reach 5 internal infections\label{subfig-4:adaptive}]{%
		\includegraphics[width=0.4\linewidth]{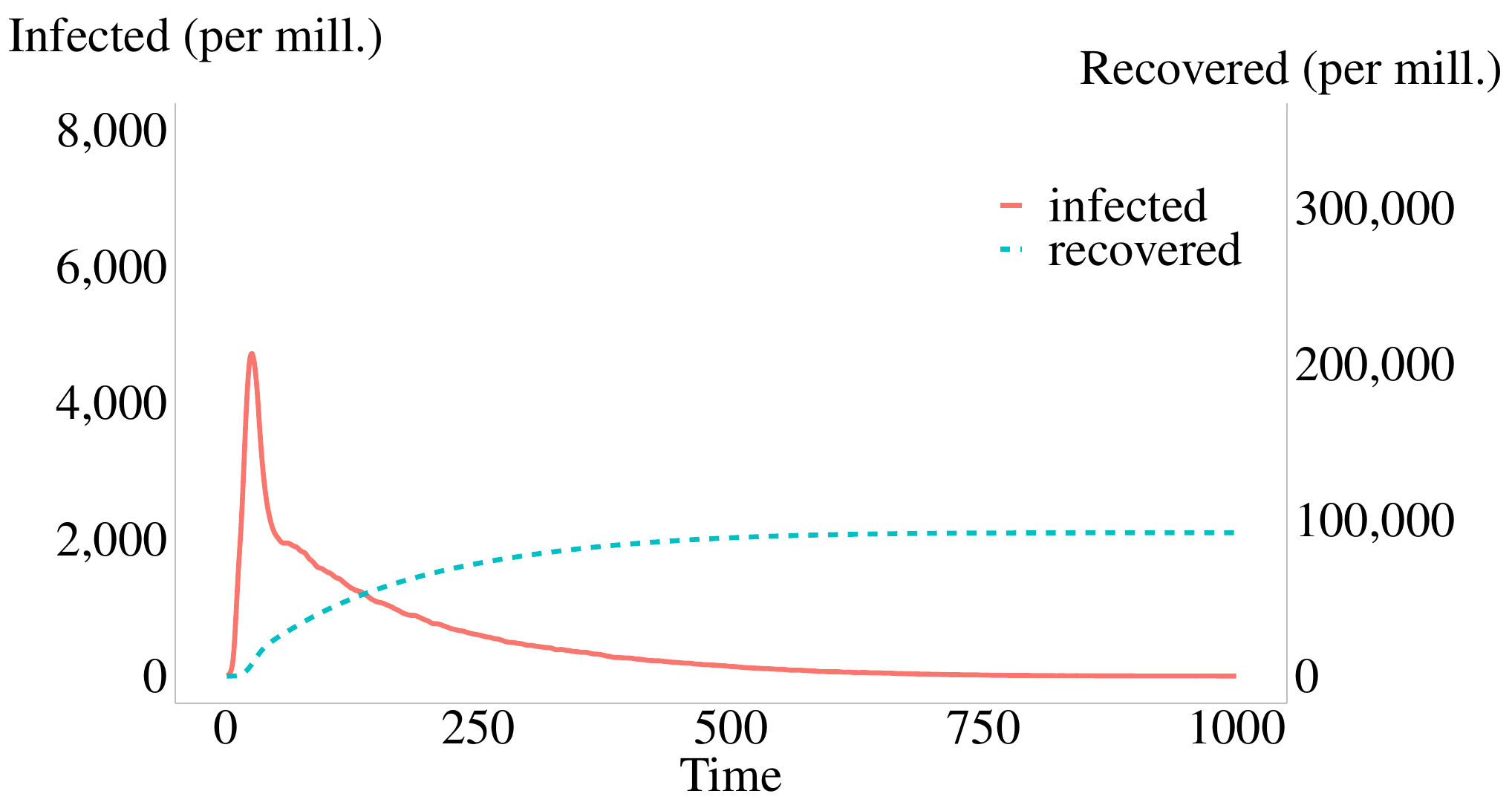}
	}
	\caption{ We picture daily infections and cumulative recoveries under four quarantine policies with
		40 jurisdictions. When a jurisdiction quarantines, it locks down the entire jurisdiction.
		In Panel \ref{subfig-1:adaptive}, all jurisdictions use a reactive policy.
		In Panel \ref{subfig-2:adaptive}, all jurisdictions use a proactive policy.
		In Panel \ref{subfig-3:adaptive}, we implement the same policies as Panel \ref{subfig-1:adaptive},
		but have four lax jurisdictions that use $x=5$ (0.14\% of the jurisdiction population) instead of $x=1$. Panel \ref{subfig-4:adaptive} has
		36 jurisdictions with proactive policies and four with lax policies.
		For each figure, we simulate 10000 times on the same network with random initial infections,
		and present the average number of infections and recovered people over time, scaled per million. }
	\label{fig:naive_vs_proactive}
\end{figure*}

Figure \ref{subfig-1:compare_policies} displays the dynamics of quarantines for each of the
policy configurations from Figure \ref{fig:naive_vs_proactive},
and Figure \ref{subfig-2:compare_policies} displays
the number of person-days of infection versus the number of person-days of quarantine.
Single jurisdiction policies (global quarantines and $(k,x)$ regional quarantines
(with $\varepsilon=0.05$ leakage)) do the best on both dimensions. Once multiple jurisdictions are introduced, proactive policies perform better than reactive ones for both infected and quarantined person-days. Even with lax jurisdictions, the proactive policy is better than the reactive policy  without lax jurisdictions. For both proactive and reactive policies, introducing lax jurisdictions increases infected person-days. However, this effect is not uniform: with proactive jurisdictions, lax jurisdictions cause a larger increase in infections (both absolutely and proportionally). With respect to quarantined person-days, lax jurisdictions have different effects depending on other jurisdictions' policies. With proactive jurisdictions, quarantined person-days increase, while with reactive jurisdictions, they slightly decrease. With reactive policies, infections spread so rapidly from lax jurisdictions that by coincidence, large numbers of jurisdictions quarantine at once -- and eventually an almost global quarantine occurs, halting the disease more quickly than in the scenarios without lax jurisdictions, but with higher infections. This relative ordering of quarantined-person days for reactive policies depends on parameters, as demonstrated in the SI: increasing $\alpha$ causes the number of quarantined person-days to increase once lax jurisdictions are added to reactive policies, rather than decrease.

\begin{figure*}[h]
	\centering
	\textbf{Figure \ref{fig:compare_policies}: The Impact and Costs of Quarantine Policies with and without Lax Jurisdictions}\par\medskip
	\subfloat[ Dynamics of quarantines in each of the policy configurations \label{subfig-1:compare_policies}]{%
		\includegraphics[width=0.48\linewidth]{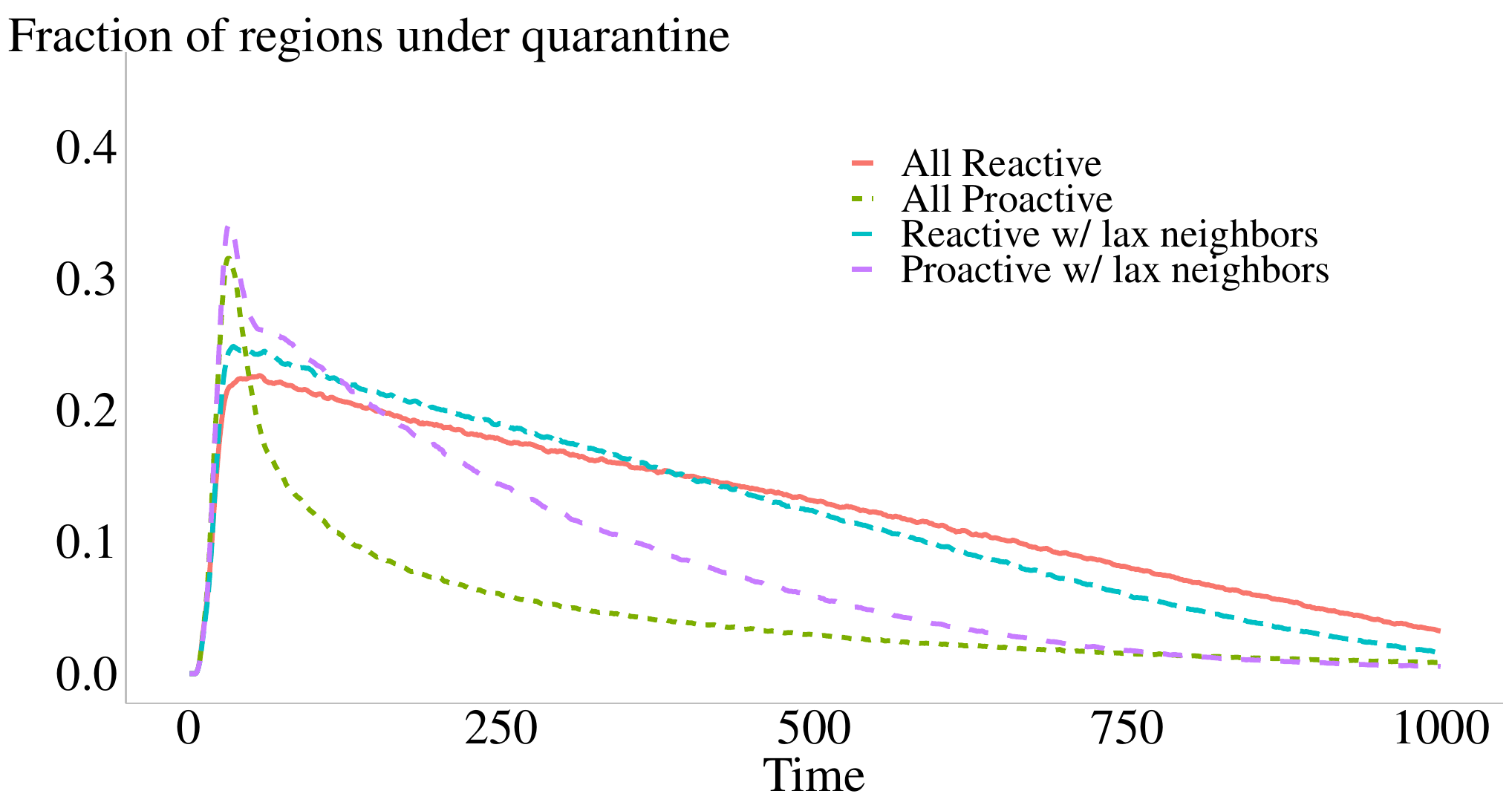}
	}
	\hfill
	\subfloat[Person-day infections vs. person-day quarantines (per million) \label{subfig-2:compare_policies}]{%
		\includegraphics[width=0.48\linewidth]{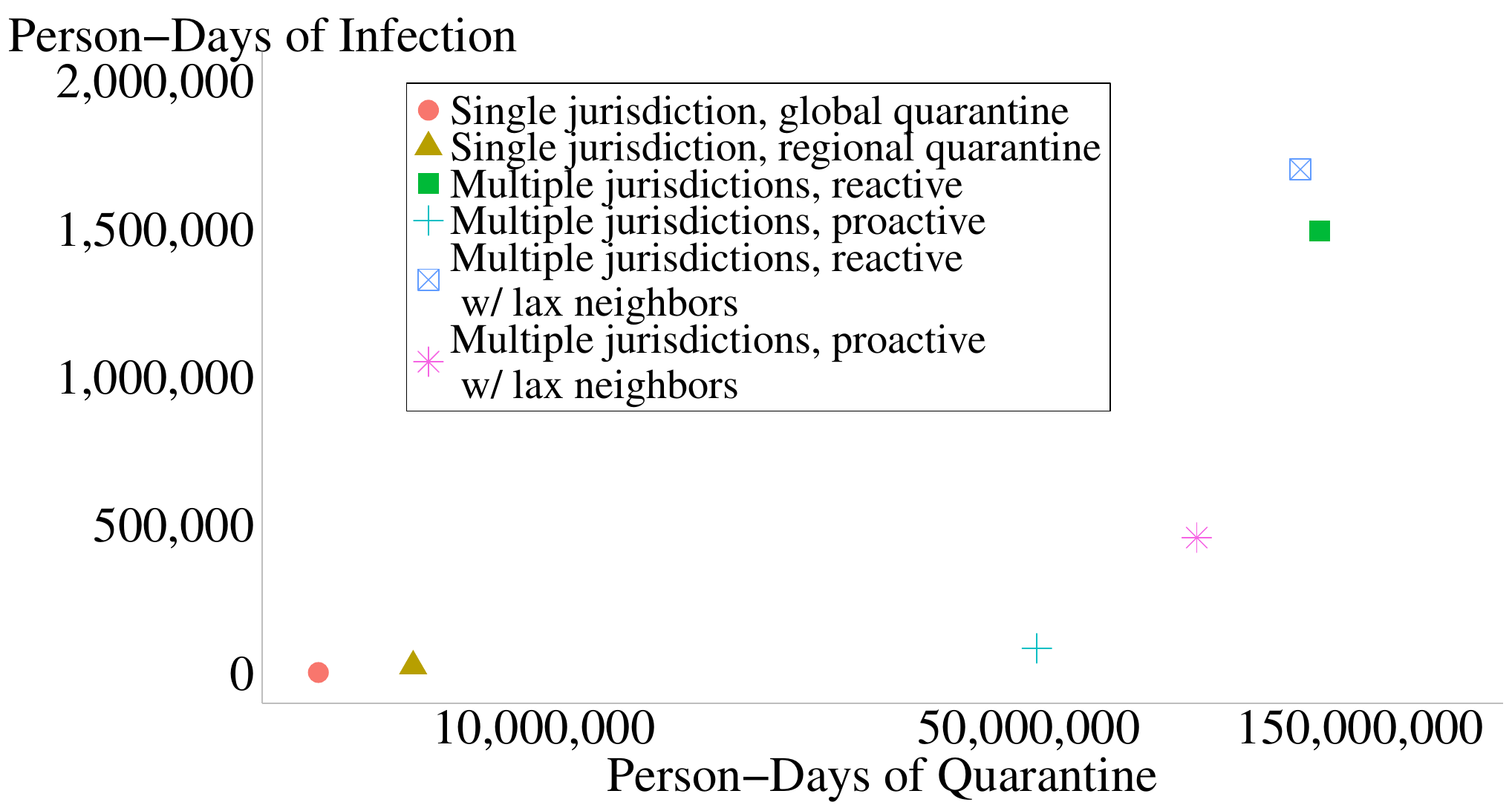}
	}\\
	\caption{Figure \ref{subfig-1:compare_policies} displays the dynamics of quarantines for each of the
		policy configurations. Figure \ref{subfig-2:compare_policies} plots the number of person-day
		infections (per million) against the number of person-day quarantines (per million) for six
		key policy scenarios.
		The global policy does the best on both dimensions, and the second best is the single-jurisdiction reactive strategy
		(which does worse than the global because of leakage).
		With 40 jurisdictions, both proactive policies outperform the internal, reactive policies. By far the worst, on both dimensions, is the internal, reactive policy with some lax jurisdictions.
		These results come from the same simulations that produce figures  \ref{fig:theorem_conditions} and \ref{fig:naive_vs_proactive}.}
	\label{fig:compare_policies}
\end{figure*}

\section*{Discussion}

We have shown that regional quarantine policies are likely to fail to halt the spread of a virus in most empirical settings, unless there is extremely rapid and efficient detection of the disease, and governments can halt all contact within the quarantined region. This failure is due to the failure of what we call ``growth-balance'', which ensures that there are no infection paths leading from infected individuals to others outside the quarantined region that are likely to be undetected.  Multiple
governments using independent policies are even less effective, as leakage occurs across their borders.  We have also shown that if governments are more attentive to their neighbors, there can be substantial improvements to their infection rates. However, if some jurisdictions are lax, this imposes a significant cost on everyone else. 

Jurisdictional policies tend to be aimed at the welfare of their internal populations,
yet the external effects are large.
Our results underscore the importance of timely information sharing and coordination in
both the design and execution of policies across jurisdictional boundaries \cite{elliott2019network}.
The results also underscore the global importance of aiding poor jurisdictions.
Indeed, there is mounting evidence that a lack of coordination across boundaries has been damaging in the case of COVID-19 \cite{holtz2020interdependence}.

The use of masks (decreasing $p$), social distancing (decreasing $d$),
and increasing testing (increasing $\alpha$), and vaccinations (decreasing $p$), all help attenuate
contagion, but unless they maintain the reproduction number below one,
the problems identified here remain.  Even tiny fractions of
interactions across borders are enough to lead to spreading in
large populations.  With modern inter- and intra-national trade and travel being a
sizable portion of all economies, such interaction is difficult to
avoid. 
Nonetheless, our analysis also offers insights into managing infections at
smaller scales; e.g., within schools, sports, and businesses.  By creating a
network of interactions that is highly modular, keeping cross-modular interactions to a minimum and making sure that they are highly traceable, together with aggressive testing (especially of
cross-module actors), one can eliminate leakage and effectively bound the set of interactions. This would divide the network into small components of diameter less than $k$, so that growth balance is satisfied by default.

We note that the effect of lax jurisdictions can be mitigated if other jurisdictions eliminate contact with that jurisdiction; e.g., with travel restrictions.
Then, for the non-lax jurisdictions, the situation returns to one without any lax jurisdictions, which is better as we have shown.
This will only work if all non-lax jurisdictions participate in a travel ban, as otherwise infection will continue to resurge in jurisdictions that continue contact with the lax jurisdictions, which then pass infection along to others. 

Our results also suggest caution in using statistical models to
identify regions to quarantine.  Although contagion models are
helpful for informing policy about the magnitude of an epidemic and
broad dynamics, the models can give false comfort in our
ability to engage in highly targeted policies, whose results can be
influenced by small deviations from idealized assumptions (e.g., leakage). Our growth-balance condition also points out that not all parts of a
network are equal in their potential for undetected transmission. Growth balance offers insight into when containment will frequently fail. While full containment is sometimes, but not always, the full policy goal, it helps us understand what features of the network aid or hinder containment efforts even under ideal conditions. Policy makers must be conscious that limited paths of interactions that can introduce the disease into a larger population, precisely because detection is very difficult along such paths, and therefore are important to monitor. In
places where the reproduction number is lower, the probability of observing outbreaks is also lower,
enabling a leakage of undetected infections.

\bibliographystyle{naturemag}
\bibliography{lockdown}

\clearpage
\appendix


\setcounter{table}{0}
\renewcommand{\thetable}{C.\arabic{table}}
\setcounter{figure}{0}
\renewcommand{\thefigure}{C.\arabic{figure}}

\begin{center}
	{\Large {\bf Supplementary Material} \\
Interacting Regional Policies in Containing a Disease \\
by Chandrasekhar, Goldsmith-Pinkham, Jackson, Thau}
\end{center}

\section{Theorem Details and Proofs}

\subsubsection*{People and  Interactions}

There are $n > 1$  nodes (individuals) in an unweighted, and possibly directed, network.

We study the course of a disease through the network. Time is discrete, with periods indexed by $t \in \mathbb{N}$.  An initial infected node, indexed by $i_0 \in V$, is the only node
infected at time 0.  We call this node the \emph{seed}.

We track the network via neighborhoods that expand outwards via (directed) paths from $i_0$.
Let $N_k$ be all the nodes who are at (directed) distance $k$ from node $i_0$.
Let $n_k$ denote the cardinality of $N_k$.

For any node in  $j\in N_{k'}$, for  $k'< k$,  let $n^j$ be the number of its direct descendants and  $n_k^j$ be the number of its (possibly indirect) descendants in  $N_k$ that are reached by never passing beyond distance
$k$ from $i_0$.

Unweighted network models are admitted here.  Additionally, the results below extend to any weighted model in which weights are bounded above and below {(e.g., probabilities of interaction)}.  Note also, that the network can be directed or undirected.

The infection process proceeds as follows. In every time period $t\in \{1,2,\ldots\}$, an infected node $i$ transmits the disease to each of $i$'s neighbors  independently with probability $p$. A newly infected node is infectious for $\theta \geq 1$ periods after which the node recovers and is never again infectious. The model can easily be extended to accommodate renewed susceptibility.

There may be a \emph{delay} in the ability to detect the disease. The number of periods of delay is given by $\tau$ with $0 \leq \tau \leq \theta$. Delay is a general term that can capture many things. For example, it can correspond to (a) asymptomatic infectiousness, (b) a delay in accessing health care given the onset of an infectious period, (c) any delay in the administration of testing, and so on.

In the first period of an infected node's infectious period -- after delay ($\tau$) -- there is a
probability $\alpha$ that the policymaker detects it as being infected.
So, potential detection happens exactly once during the first period in which the node can be detected.
Detection is independently and identically distributed. Our results are easily extended to have a random period for detection after the delay.

Finally, the policymaker may face some error in their knowledge of the network. This can come from their limited enforcement capacity, random noise in data collected to estimate interaction networks, or from network model misspecification. If there is error, we will track a share $\epsilon$ of nodes that are within a $k$-neighborhood of the seed but are estimated by the policymaker to be outside the $k$-neighborhood.

\subsubsection*{Regional Quarantine Policy}

Let a \emph{regional policy of distance $k$ and threshold $x$} be such that  once there are at least $x$ infections (other than the seed) detected within distance $k$ from the
initial seed, then all nodes within distance $k+1$ of $i_0$ are quarantined for at least $\theta$ periods.  A quarantine implies all connections between nodes are severed to avoid any further transmission, the infection lasts its duration $\theta$ and dies out.

Implicit in this definition is that a quarantine is not instantaneous, but that infected people could have infected their neighbors before being shut down, which is why the nodes at distance $k+1$ are quarantined.
All the results below extend if we assume that it is instantaneous, but with quarantines moved back one step and path lengths in definitions correspondingly adjusted.

We have assumed the policymaker knows the ``seed,'' for simplicity - and which may take some time in reality.
This provides an advantage to the policymaker, but we see substantial containment failures despite this advantage.

\subsubsection*{Growth Balance}

In order to conduct asymptotic analysis, a useful device to study the probabilities of events in question in large networks, we study a sequence of networks $G(n)$ with $n\rightarrow \infty$ and an associated sequence of  parameters $(\alpha,p,\tau,\theta,{k}) = (\alpha(n),p(n),\tau(n),\theta(n),k(n))$. 

Consider a network and a distance $k$ from the initially infected node $i_0$.
A {\sl path of potential infection to} ${k+2}$ is
a sequence of nodes $i_0,i_1,\ldots i_\ell$ with $i_\ell\in N_{k+1}$, $i_{j+1}$ being a direct descendant of $i_j$ for each $j\in \{0,\ldots, \ell-1\}$, and for which $i_\ell$ has a descendant in $N_{k+2}$.

Consider a sequence of networks and $k(n)$s.
We say that there are {\sl bounded paths of potential infection  } from $i_0(n)$ to $k(n)+2$ if there exists some finite $M$
and for each $n$ there is a path of potential infection to ${k(n)+2}$,
$i_0(n),i_1,\ldots i_\ell$ of length less than $M$,  with  $n^{i_j}<M$ for every $j\in \{0,\ldots, \ell-2\}$.

We say that a sequence of networks is {\sl growth-balanced} relative to some $k(n)$ (and sequence of $i_0(n)$)
if there are no bounded paths of potential infection to $k(n)+2$.   This is equivalent to stating that there exists a sequence $m(n)\rightarrow \infty$ such that each path of potential infection from $i_0(n)$ to $k(n)+2$ is either of length at least $m(n)$ or has some node with degree at least $m(n)$. 

If $k(n)$ grows without bound, then the condition is satisfied trivially, so the bounded case is the one of interest; and also the one of practical interest given the small diameter of real-world networks.

Also note that the condition is stated with respect to a sequence of seed nodes.  The results extend directly if one wants things to hold with respect to sets of seeds by requiring that the conditions hold for sequences of sets of seeds.

Growth balance is essentially a condition that requires a minimum bound of expansion along all potential paths of infection to escape a regional quarantine from some initial infection.
The intuition behind the condition is clear:  to ensure detection of an outbreak before it reaches a distance $k+1$ from the seed, many of the nodes within distance $k$ must be exposed to the disease by the time it reaches distance $k$.   What is ruled out is
a relatively short path that gets directly to that distance without having many nodes be exposed along that path.\footnote{This is very different from conditions that concern long paths within
short distances, such as \cite{ugander2013graph}, as ours is ruling out {\sl short} paths with low expansion.}

Figure S1 presents an illustration of a network that is not growth-balanced.

\subsection*{Results}

\subsubsection*{A Benchmark: No Delay in Detection; Perfect Information and Enforcement}

We begin with a benchmark case in which there is no delay in detection  ($\tau(n) = 0$) and the policymaker can completely enforce a quarantine at some distance $k(n)+1$.\footnote{Note that this requires knowledge
of the neighborhood structure around the seed node, but no other
knowledge of the network by a policy maker.}

We allow the size of the quarantine region $k$ to depend on $n$ in any way, as the theorem still applies.
We work with an arbitrary but fixed infection threshold $x$.  What is important
is that $x$ not grow too rapidly, as otherwise the likelihood of observing $x$ infections within proximity $k$ to the seed is extremely low.\footnote{
The theorem extends to allow $x=x(n)$ to grow with $n$, provided the growth is sufficiently slow, and then that growth-balance condition becomes more
complicated, as the $M = M(n)$ in that definition adjusts with the rate of growth of $x$.}

\begin{theorem}\label{thm:benchmark}
	Consider any sequence of networks and associated $k(n)<K(n)-1$ where $K(n)$ is the maximum $k(n)$ for which $n_{k}>0$;\footnote{Otherwise, it is actually a global policy.}$^,$
	such that each node in $N_{k(n)+1}$ has at least one descendent at distance $k(n)+2$, and let $x$ be any fixed positive integer.
	Let the sequence of associated diseases have $\alpha(n)$ and $p(n)$ bounded away from 0 and 1,\footnote{The cases of $p(n)$ or $\alpha(n)$ equal to 1 are degenerate.}
	no delay in detection, and any $\theta(n)\geq 1$.
	A regional quarantining policy of distance $k(n)$ and threshold $x$  halts all infections past distance $k(n)+1$ with a probability tending to 1
	\emph{if and only if} the sequence is growth-balanced with respect to $k(n)$.
\end{theorem}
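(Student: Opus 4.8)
The plan is to recast success of the policy as a race between detection and escape, and then to show that growth-balance is exactly the condition under which detection wins with probability tending to one. Say the infection \emph{escapes} if some node at distance $k+2$ is ever infected, and let the policy \emph{succeed} if the quarantine is triggered in time to prevent escape. Because detection of a node is an independent $\alpha$-coin flipped at the node's first infectious period, I would first condition on the entire transmission (percolation) realization; the detection coins are then independent of which nodes are infected, so conditionally the trigger is a function of independent $\mathrm{Bernoulli}(\alpha)$ variables attached to the infected within-$k$ nodes. The one point requiring care at the outset is the \emph{timing}: under the model's non-instantaneous quarantine (the reason the buffer reaches distance $k+1$), only the detections of nodes infected early enough can stop the final $N_{k+1}\to N_{k+2}$ transmission. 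Let $S$ denote this set of ``in-time'' infected nodes within distance $k$; matching the bookkeeping to the definition, $S$ excludes the exposures created by the last two nodes of an infection path to the frontier. The whole statement then reduces to a dichotomy for $|S|$: if $|S|\to\infty$ in probability then $\Pr(\text{fewer than }x\text{ in-time detections})\le \Pr(\mathrm{Binomial}(|S|,\alpha)<x)\to 0$ and the policy succeeds; if $|S|$ stays bounded with probability bounded away from zero then, since $\alpha<1$, with conditional probability at least $(1-\alpha)^{|S|}$ none is detected and the policy fails.

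For the \emph{if} direction I would show growth-balance forces $|S|\to\infty$ conditional on escape. Suppose the infection escapes and let $w^{\ast}$ be the first frontier node (in $N_{k+1}$) to be infected; trace its realized infection path $P=(i_0,\dots,i_\ell=w^{\ast})$. Because $w^{\ast}$ is the first frontier node infected, every earlier node on $P$ lies within distance $k$, and by hypothesis $w^{\ast}$ has a descendant in $N_{k+2}$, so $P$ is a path of potential infection to $k+2$. Growth-balance says $P$ is either long ($\ell\ge M$) or contains a node $i_j$ with $j\le\ell-2$ and out-degree $n^{j}\ge M$. In the long case $P$ already contributes $\ell-1\ge M-1$ within-$k$ nodes, all infected strictly before $w^{\ast}$ and hence in time. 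In the high-degree case I would argue that such a node can be taken at distance at most $k-1$ and early enough in the timeline: if it sat at distance $k$ with most of its neighbors on the frontier, then splicing one such frontier neighbor onto the prefix of $P$ would produce a strictly shorter bounded path, contradicting growth-balance; so a first high-degree node exposes $\gtrsim pM$ neighbors within distance $k$, infected in time. Either way $|S|\gtrsim M$, and letting $M\to\infty$ along the growth-balanced sequence drives $\Pr(\text{fail})\to 0$.

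For the \emph{only if} direction I would use the witnessing bounded path directly. If the sequence is not growth-balanced there is a finite $M$ and, for each $n$, a path $i_0,\dots,i_\ell,i_{\ell+1}$ with $\ell<M$, $i_\ell\in N_{k+1}$, $i_{\ell+1}\in N_{k+2}$, and $n^{j}<M$ for $j\le\ell-2$. I would condition on the explicit event that the infection transmits exactly along this path to reach $i_{\ell+1}$ and that each low-degree node $i_0,\dots,i_{\ell-2}$ transmits \emph{only} along the path. Because there are at most $M$ transmissions to force and at most $M$ neighbors to suppress at each bounded-degree node, and $p$ is bounded away from $0$ and $1$, this event has probability bounded below uniformly in $n$. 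On it, escape occurs, while the only in-time within-$k$ exposures are the at most $\ell\le M$ path nodes themselves (the extra neighbors of $i_{\ell-1}$ and $i_\ell$ are created too late to matter, which is exactly why the definition exempts the last two nodes). With probability at least $(1-\alpha)^{M}>0$ none of these is detected, so fewer than $x$ detections occur before escape and the policy fails with probability bounded away from zero.

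I expect the crux to be the structural step in the \emph{if} direction: turning the purely combinatorial growth-balance condition into a statement about \emph{in-time} within-distance-$k$ exposures. The delicate parts are the first-frontier-node reduction, which guarantees the relevant path stays inside distance $k$; locating a high-degree node at distance at most $k-1$ and early enough that the neighbors it infects are detected before the escaping transmission, via the shorter-bounded-path contradiction; and making the timing definition of $S$ precise so that it aligns with the exemption of the last two path nodes. Once $|S|\to\infty$ is established, the reduction to a Binomial tail and the $M\to\infty$ limit are routine, as is the matching lower bound in the converse.
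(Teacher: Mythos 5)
Your proposal is correct and follows essentially the same route as the paper's proof: the same dichotomy in the \emph{if} direction (either the path of infected nodes is long, or some node $i_j$ with $j\le \ell-2$ has many descendants, each infected and detected with probability bounded below, so a binomial tail bound gives at least $x$ detections as $M\to\infty$), and the same explicit construction in the \emph{only if} direction (force transmission along the bounded witness path while suppressing all other transmissions and detections, an event of probability bounded below by roughly $\left(p(1-\alpha)(1-p)^M\right)^M$, then escape with one more factor of $p$). Your additional bookkeeping --- conditioning on the percolation realization, the in-time set $S$, and the first-frontier-node reduction --- formalizes timing details the paper treats implicitly; also note that under the paper's definitions a path of potential infection moves outward one layer per step, so $j\le\ell-2$ automatically places the high-degree node and its direct descendants within distance $k$, which makes your splicing argument unnecessary.
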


Note that the growth-balance condition implies that the number of nodes within distance $k(n)$ from $i_0$ must grow without bound. 
Theorem \ref{thm:benchmark} thus implies that in order for a regional policy to work, the region size must grow without bound, and also must satisfy a particular balance condition.  (Rates at which this growth must occur as a function of $k$ and $n$, can be deduced from the relevant infection probabilities and network structure.)

\begin{proof}[Proof of Theorem \ref{thm:benchmark}]
	
	To prove the first part, note that if the infection never reaches distance $k(n)$ then the result holds directly since it can then not go beyond $k(n)+1$.
	We show  that if the sequence of networks is growth-balanced relative to $k(n)$, then conditional upon an infection reaching level $k(n)$ with the possibility of reaching $k(n)+2$ within two periods, the probability that it infects more than $x$ nodes within distance $k(n)$ before any nodes beyond $k(n)$
	tends to 1.
	Suppose that infection reaches some node at distance $k(n)$ that can reach a node in $N_{k+1}$.   Consider the corresponding sequence of paths of infected nodes
$i_0,i_1,\ldots i_\ell$ with $i_\ell\in N_{k+1}$, $i_{j+1}$ being a direct descendant of $i_j$ for each $j\in \{0,\ldots, \ell-1\}$, and note that by assumption $i_\ell$ has a descendant in $N_{k+2}$.
	By the growth-balance condition, for any $M$, there is a large enough $n$ for  which either the length of the path is longer than $M$ or else there is at least one $i_j$ with $j\leq \ell-2$ along the path that has more than $M$ descendants.
	In the latter case, the probability that $i_j$ has more than $x$ descendants who become infected and are detected is at least $1-F_{M,m}(x)$ where $F_{M,m}$ is the binomial distribution
	with $M$ draws each with probability $m$, where $p(n)\alpha(n)>m$ for some fixed $m$.  Given that $x$ and $m$ are fixed, this tends to probability 1 as $M$ grows.
	In the former case, the sequence exceeds length $M$, all of which are infected and so given that $\alpha(n)$ is bounded below, the probability that at least $x$ of them are detected goes to 1 as $M$ grows.
	In both cases, as $n$ grows, the minimal $M$ across such paths of potential infection to $k(n)+1$ grows without bound, and so
	the probability that there are at least $x$ infections that are detected by the time that $i_{\ell -1}$ is reached tends to 1 as $n$ grows.

	To prove the converse, suppose that the network is not growth-balanced.
	Consider a sequence of bounded paths of potential infection to $k(n)+2$, with associated
	sequences of nodes $i_0,i_1,\ldots i_\ell$ of length less than $M$ with $i_\ell\in N_{k+1}$, $i_{j+1}$ being a direct descendant of $i_j$ for each $j\in \{0,\ldots, \ell-1\}$, with  $n^j<M$ for every $j\in \{0,\ldots, \ell-2\}$, and for which $i_\ell$ has a descendant in $N_{k+2}$.
	The probability that each of the nodes $i_1,\ldots i_{\ell-2}$ becomes infected and no other nodes are infected within distance $k(n)-1$, and that all infected nodes are undetected is at least $(p(n)(1-\alpha(n)) (1-p(n))^M )^M$.
	This is fixed and so bounded away from 0.  This implies that probability that the infection gets to nodes at distance $k(n)$, and $i_{\ell-1}$ in particular, without any detections is bounded below.
Thus, there is a probability bounded below of reaching $i_\ell$ before any detections, and then by the time the quarantine is enacted,
there is at least a $p(n)$ times this probability that it
 escapes past $N_{k+1}$, which is thus also bounded away from 0.\end{proof}

We note that Theorem \ref{thm:benchmark} admits essentially all sequences of (unweighted) networks.
Thus, for every type of network, one can determine whether a regional policy of some $(k(n),x)$ will succeed or fail.
The only thing that one needs to check is growth-balance.
If it is satisfied, a regional policy works, and otherwise it will fail with nontrivial probability.

This has implications for some prominent random network models. 
Consider a randomly chosen sequence of seeds and networks from the associated networks:
  \begin{enumerate}
  \item For a sequence of stochastic block models in which all nodes have expected degree $d(n)> \log(n)$ so that the network is path connected (with Erdos-Renyi as a
    special case),\footnote{Consider a sequence of block models
      such that the ratio of expected out degree of a node in
      one neighborhood compared to another in some other block
      cannot grow without bound.} a regional policy with a
    bounded $k(n)$ has a probability going to 1 of halting the
    disease on the randomly realized network if and only if the
    seed node's expected out degree $d(n)>1$ is such that
    $d(n)^{k(n)}\rightarrow\infty$.
  \item For a regular expander graph with outdegree $d(n)> 1$,  a regional policy works if and only if  the expansion rate $d(n)^{k(n)}\rightarrow\infty$.
  \item For a regular lattice of degree $d(n)>1$,  a regional policy works if and only if $d(n)^{k(n)}\rightarrow \infty$.
  \item For a rewired lattice with $d(n)>1$ for all nodes and with a fraction links that are randomly rewired, a regional policy with a bounded $k(n)$ has a probability going to 1 of halting the disease on the randomly realized network if and only if $d(n)^{k(n)}\rightarrow \infty$.
  \item For a sequence of random networks with a scale-free degree distribution with average degree $d(n)>\log(n)$,
    a regional policy works (with probability 1) if and only if $k(n)\rightarrow \infty$.
  \end{enumerate}
  

Thus, whether a regional policy works in almost any network model requires that  either
the degree of almost all nodes grows without bound, or else the size of the quarantine grows without bound.
For a scale free distribution, there is always a nontrivial probability on small degrees, and hence in order for a regional policy to work, the size of the neighborhood must grow without bound.

In practice, even very sparse networks will have a large $d(n)^{k(n)}$ (e.g., if people have hundreds of contacts, $100^3$ is already a million and even with a very low $\alpha(n)$ many
infections will be detected within a few steps of the initial node).\footnote{This is still extremely sparse, as having 100 contacts out of millions or billions of potential other nodes is a small fraction.}
What the growth-balance condition rules out is that some nontrivial part of the network have
neighborhoods with many fewer contacts - so there cannot be people who have just a few contacts, since that will allow for a nontrivial probability of undetected escape (e.g., $2^3=8$ and so with only 8 infections, it is possible that none are detected and the disease escapes beyond 3 steps).
As many real-world network structures have substantial heterogeneity, with some people having very low numbers of interactions, such an escape becomes possible {even under idealized assumptions of no delay in detection and no leakage} \cite{hoffrh2002,penrose2003random,leskovec2008statistical,banerjeecdj2013,chandrasekharj2016}.

\subsubsection*{Delay in Detection}

The detection delay, $\tau(n)$, is distributed over the support $\{1,\ldots,\tau^{\max}(n)\}$. This includes degenerate distributions with $\tau^{\max}(n)$ being the maximal value of  the  support  with  positive mass. The policymaker may  or may  not know  $\tau^{\max}(n)$ and we study both cases. The latter is important as in practice we estimate delay periods so there is bound to be uncertainty.
When $\tau(n)$ is known, we can simply say $\tau(n) = \tau^{\max}(n)$.

Let a \emph{regional policy with trigger $k(n)$ and threshold $x$ and buffer $h(n)$} be such that once there are at least $x$ infections detected within distance $k(n)+h(n)$ from the
initial seed, then all nodes within distance $k(n)+h(n)+1$ of $i_0$ are quarantined/locked down  for at least $\theta(n)$ periods.

There are two differences between this definition of regional policy from the one considered before.  First, it is triggered by infections within distance $k(n)+h(n)$ (not within distance $k(n)$), and it also has a buffer in how far the quarantine extends beyond the $k(n)$-th neighborhood.

We extend the definition of growth balance to account for buffers.

Consider a network and a distance $k(n)$ from the initially infected node $i_0$ and an $h(n)\geq 1$.
A {\sl path of potential infection to} ${k(n)+h(n)+2}$ is
a sequence of nodes $i_0,i_1,\ldots i_\ell$ with $i_\ell \in N_{k(n)+h(n)+1}$, $i_{j+1}$ being a direct descendant of $i_j$ for each $j\in \{0,\ldots, \ell-1\}$.

Consider a sequence of networks, $n$, and associated $k(n), h(n)$.
We say that there are {\sl bounded paths of potential infection }to $k(n)+h(n)+2$ if there exists some finite $M$
and for each $n$ there is a path of potential infection to ${k(n)+h(n)+2}$,
$i_0,i_1,\ldots i_\ell$ of length less than $M$,  with  $n^j<M$ for every $j\in \{0,\ldots, \ell-h(n)-2\}$.
We say that a sequence of networks is {\sl growth-balanced} relative to some $k(n)$ and buffers $h(n)$
if there are no bounded paths of potential infection to $k(n)+h(n)+2$.

\begin{theorem} \label{thm:latent}
Consider a sequence of diseases have $\alpha(n)$ and $p(n)$ bounded away from 0 and 1,
	$\theta(n)\geq 1$, and have a detection delay distributed over some set $\{1,\ldots, \tau^{\max}(n)\}$ with $\tau^{\max}>1$
	(with probability on $\tau^{\max}(n)$ bounded away from 0).\footnote{A special case is in which $\tau^{\max}(n)$ is known.}
	Consider any sequence of networks
	and  $k(n)<K(n)-\tau^{\max}(n)-1$ where $K(n)$ is the maximum $k(n)$ for which $n_k>0$,
such that each node in $N_{k'}$ for $k(n)'>k(n)$ has at least one descendent at distance $k(n)'+1$,
	and let $x$ be any fixed positive integer.
	A regional policy with trigger $k(n)$, threshold $x$, and buffer $ \tau^{\max}(n)$  halts all infections past distance $k(n)+\tau^{\max}(n) {+1}$ with a probability tending to 1
	\emph{if and only if} the sequence is growth-balanced with respect to $k(n)$.
\end{theorem}

The Proof of Theorem \ref{thm:latent} is a straightforward extension of the previous proof and so it is omitted.

This result shows several  things. First, if the detection delay is small relative to the diameter  of the graph, one can use a regional  quarantine policy -- adjusted for the detection delay -- along the lines of that from  Theorem \ref{thm:benchmark} and ensure no  further  spread. This  is true even if the period  is  stochastic as long as the upper bound is known to be small.

Second,  and in contrast, if the detection delay is large compared to the  diameter of the graph,
then a regional  policy  is insufficient.  By the time  infections are observed, it is too late to
quarantine a subset of the graph.  This condition will tend to bind in the case  of real world networks,
as they exhibit small world properties and have small diameters
\cite{amaral2000classes,chung2002average}. 
As a result, even short detection delays may correspond to rapidly moving wavefronts that spread undetected.

\subsubsection*{Leakage in the Quarantine}

Next we turn to the case of in which there is some leakage in the quarantine, which may come for a variety of reasons.
First, the policymaker may have measurement error in knowledge of the network structure and thus who should be quarantined.  Second, and distinctly, lockdowns are
imperfect, and some transmission still happens.
Third, the network may cross jurisdictional borders and some nodes within distance $k(n)$ of $i_0$ may be outside of the policymaker's jurisdiction.

To keep the analysis uncluttered, we assume no detection delay, but the arguments extend directly to the delay case with the appropriate buffer.

\begin{theorem}\label{thm:knowledge}
	Consider any sequence of networks.
	Let the sequence of associated diseases have $\alpha(n)$ and $p(n)$ bounded away from 0 and 1,
	and be such that $\theta(n)\geq 1$, with no detection delay.
	Consider any $k(n)<K(n)-1$ where $K$ is the maximum $k(n)$ for which $n_k>0$, suppose that each node in $N_{k(n)}$ has at least one
descendent at distance $k(n)+1$, and let $x$ be any positive integer. 	
	
	Suppose that a random share of $\varepsilon_n $ of nodes within distance $k(n)$ of $i_0$ are not included in a regional quarantine policy and are connected
	to nodes of distance greater than $k(n)+1$ -- because of a lack of jurisdiction,
	misclassification by a policymaker, or lack of complete control over people's behaviors.\footnote{The misclassification can be that if some node within distance
	$k(n)$ is controlled by the quarantine, but connects to nodes that are not included and were thought to be of greater distance, but then allow the 
	disease to escape beyond the quarantine.}
	Then:
	\begin{enumerate}
		\item If $\varepsilon_n = o( (\sum_{k(n)'\leq k(n)} n_{k'})^{-1})$ and the network is growth-balanced, then a regional policy of distance $k(n)$ and threshold $x$ halts all infections past distance $k(n)+1$ with a probability  tending to 1.
		
		\item 	If $\varepsilon_n \geq \min[1 /x,\eta]$ for all $n$  for some $\eta>0 $
		or the network is not growth-balanced, then a regional policy of distance $k(n)$ and threshold $x$ fails to halt all infections past the regional quarantine with a probability bounded away from 0.	
	\end{enumerate}
\end{theorem}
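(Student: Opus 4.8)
The plan is to treat the leaked set and the epidemic as independent --- leakage is a random labeling of the nodes within distance $k$, assigned independently of the infection and detection processes --- and then to reduce to Theorem \ref{thm:benchmark} wherever possible. For part 1, write $T_n=\sum_{k'\le k} n_{k'}$ for the number of nodes within distance $k$. Each such node is leaked independently with probability $\varepsilon_n$, so the expected number of leaked nodes is $\varepsilon_n T_n$, and by Markov's inequality the probability that \emph{any} node is leaked is at most $\varepsilon_n T_n=o(1)$, using $\varepsilon_n=o(T_n^{-1})$. On the complementary event, whose probability tends to $1$, there is no leakage at all and the policy coincides with the idealized regional policy of Theorem \ref{thm:benchmark}; since the network is growth-balanced, that theorem gives containment with probability tending to $1$. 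Because the leaked set is independent of the epidemic, conditioning on ``no leakage'' does not change the law of the infection, so multiplying the two probabilities gives containment with probability tending to $1$.

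For part 2 I split on growth balance. \textbf{Case A (not growth-balanced).} The converse direction of Theorem \ref{thm:benchmark} already produces, with probability bounded away from $0$, a short low-expansion path along which the infection reaches $N_{k+1}$ undetected and then crosses into $N_{k+2}$ before the quarantine is triggered, using only ordinary (non-leaked) nodes. Adding leakage only removes nodes from the quarantine and opens extra links out of the $k$-ball, which can only lower detection probabilities and raise escape probabilities; hence this escape survives and the failure probability stays bounded below. \textbf{Case B (growth-balanced).} Here the hypothesis forces $\varepsilon_n\ge\min[1/x,\eta]=:c>0$. Growth balance makes the process expand along every path, so with probability bounded below uniformly in $n$ (even when $k(n)\to\infty$, via the survival probability of the supercritical process) the infection reaches $N_k$; conditional on this, the forward mechanism in the proof of Theorem \ref{thm:benchmark} shows that at least $x$ nodes within distance $k$ become infected with probability tending to $1$. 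Since leakage is independent of which nodes are infected, at least one of these $\ge x$ infected nodes is leaked with probability at least $1-(1-\varepsilon_n)^x\ge 1-(1-c)^x\ge\min[1-e^{-1},\eta]>0$. A leaked infected node is, by stipulation, never quarantined and connected to a node at distance greater than $k+1$; remaining infectious for $\theta\ge 1$ periods, it transmits across that link with probability at least $p$, producing an infection past distance $k+1$. Multiplying the three bounded-below probabilities yields a failure probability bounded away from $0$.

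The two regimes of the threshold are reconciled by the elementary bound just used: when $c=1/x$ (i.e.\ $1/x\le\eta$) one has $(1-1/x)^x\le e^{-1}$, giving the $1-e^{-1}$ floor for large $x$; when $c=\eta$ (i.e.\ $\eta<1/x$, forcing $x$ bounded) one has $(1-\eta)^x\le 1-\eta$, giving the floor $\eta$. This explains why the threshold is stated as $\min[1/x,\eta]$ rather than a single constant.

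The main obstacle is Case B. The delicate point is to make the spreading bound uniform in $n$ --- that growth balance forces the infection to reach $N_k$ and to generate at least $x$ internal infections with non-vanishing probability even as $k(n)$ grows --- and then to convert ``at least $x$ infected internal nodes'' into ``a leaked infected node'' without circularity, keeping the independent leaked labeling cleanly separate from the detection-driven trigger time. The reduction of Case A to the converse of Theorem \ref{thm:benchmark}, together with the monotonicity claim that leakage never aids containment and the final transmission-out step, is routine.
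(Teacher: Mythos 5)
Your Part 1 is correct and is exactly the paper's argument (a Markov/union bound shows no node in the $k$-ball is leaked with probability tending to one, then Theorem \ref{thm:benchmark} applies), and your Case A makes explicit a monotonicity step the paper leaves implicit; both of those are fine. The genuine gap is in Case B, at precisely the step you flag as delicate: the claim that growth balance forces the infection to reach $N_{k}$ with probability bounded below uniformly in $n$, ``via the survival probability of the supercritical process.'' Nothing in the hypotheses makes the epidemic supercritical: $p$ is only bounded away from $0$ and $1$, and growth balance is a structural condition on the network, not on the transmission process. Worse, growth balance is \emph{vacuously} satisfied whenever $k(n)\to\infty$, because every path of potential infection to $N_{k(n)+1}$ has length at least $k(n)+1$, so for any fixed $M$ there are eventually no paths of length less than $M$ at all. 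Thus one can satisfy all hypotheses of Part 2 with a bounded-degree network (say a tree with three offspring per node) and $p$ small enough that the epidemic is subcritical; then the infection reaches $N_{k(n)}$ with probability tending to $0$, your first factor vanishes, and the product of probabilities you form gives no lower bound on the failure probability. (For bounded $k(n)$ your step 1 does hold, but trivially -- a geodesic of bounded length is traversed with probability at least $p^{k}$ -- not via any supercriticality; the regime your parenthetical claims to cover is exactly the regime where the step fails.)

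The repair is to notice that the escape mechanism never needs the infection to travel far, and this is also the essential content of the paper's own (terse) proof: by hypothesis, a leaked node is one that is both exempt from the quarantine \emph{and connected to a node at distance greater than $k+1$}, so its outside edge is available to the infection immediately, not only after the epidemic reaches the boundary. Since $x$ is a fixed integer, $c:=\min[1/x,\eta]$ is a fixed positive constant, and the leaked set is independent of the epidemic, so any single infected node in the $k$-ball is leaked with probability at least (of order) $c$. Hence the failure probability is at least $p\cdot c\cdot p$: the seed infects a given neighbor with probability at least $p$ before any quarantine can be triggered, that neighbor is leaked with probability at least $c$, and a leaked infected node transmits along its edge to a node beyond distance $k+1$ (neither endpoint being quarantined) with probability at least $p$. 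This requires no supercriticality, no triggering of the policy, and no accumulation of $x$ infections. The paper instead conditions on ``some $x$ infections are detected'' and applies the $1-(1-\varepsilon_n)^x$ bound to those detected nodes -- structurally close to your Case B -- but the load-bearing fact is the same one-infected-leaked-node observation; your version, by routing the argument through reaching $N_k$ and through the forward mechanism of Theorem \ref{thm:benchmark}, introduces a dependence on epidemic survival that the hypotheses do not supply and that cannot be patched by any survival-probability argument.
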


\begin{proof}[Proof of Theorem \ref{thm:knowledge}]
	Part 1 follows from the fact that if  $\varepsilon_n = o( (\sum_{k(n)'\leq k(n)} n_{k'})^{-1})$ then
	the probability of having all nodes in $N_k$ correctly identified as being in $N_k$ tends to 1, and then Theorem \ref{thm:benchmark} can be applied.
	
	For Part 2, suppose that some $x$ infections are detected.
	The probability that at least one of them is misclassified is at least $1-(1-\varepsilon_n)^x$.
	Given that $\varepsilon_n \geq \min[1 /x,\eta
	]$ for some $\eta 
	>0$, it follows that $(1-\varepsilon_n)^x$ is bounded away from 1.
	There is a probability bounded away from 0 that at least one of the infected nodes is misclassified, and not subject to the quarantine, and connected to a node
	outside of distance $k(n)+1$. \end{proof}

	The theorem implies that the effectiveness of a regional policy is sensitive to any small fixed $\varepsilon$ amount of leakage.


\section{Simulation Details}
To illustrate the processes described in the main text, we run several simulations. First, we construct a large network with many jurisdictions. We directly study the content of the theorems with several versions of $(k,x)$ quarantines with an SIR infection process on a network. We use the same process and network to show the issues with jurisdictional policies, studying reactive and proactive policies.

\subsection*{Network Model}
We model the network structure as follows.
\begin{enumerate}
	\item There are $L$ locations distributed uniformly at random on the unit sphere. Each location has a population of $m$ nodes with a total of $n = mL$ nodes in the network.
	
	\item The linking rates across locations are given as in a spatial model \cite{hoffrh2002,breza2019ARD}. The probability of nodes $i \in \ell$ and $j \in \ell'$ for locations $\ell \neq \ell'$ linking depends only on the locations of the two nodes and  declines in distance:
	
	$$q_{\ell,\ell'}=\exp(a+b\cdot\textrm{dist}(\ell,\ell'))$$
	where $\textrm{dist}(\ell,\ell')$ is the distance between the two locations on the sphere and $a,b<0$.
	
	Every interaction between every pair of nodes is drawn independently from the observed spatial distribution, with distances measured along the surface of the unit sphere.
	
	\item The linking patterns within a location are given as in a mixture of random geometric (RGG) \cite{penrose2003random} and Erdos-Renyi (ER) random graphs \cite{erdosr1959}. Specifically, as spheres are locally Euclidean, we model nodes in a location (e.g., in a city) as residing in a square in the tangent space to the location. The probability that two nodes within a location link declines in their distance in this square.
	
We set $d_{RGG}$ as the desired degree from the RGG. Nodes are uniformly distributed on the unit
square $[0,1]^2$, and links are formed between nodes within radius $r_\ell$  \cite{penrose2003random}. Let $d_\ell$ be the desired average degree for all nodes within location $\ell$ with $m_\ell$ as the population at location $\ell$, which we take as fixed for our exercises.  	We define  
	$$\pi=\frac{d_{\ell}-d_{RGG}}{m_\ell}$$
	which is the probability with which remaining links within location are drawn (i.i.d.). 
 To obtain the desired degree we set
	
	$$r_\ell=\sqrt{\frac{d_{RGG}}{m_\ell\pi}}.$$

	\item Next, we uniformly add links to create a small world effect, with identical and independently distributed probability $s=\frac{1}{cn}$, where $c$ is an arbitrary constant and $n$ is the total number of nodes in the network \cite{watts1998collective}.
	
	\item Finally, we designate a single location as a ``hub,'' to emulate the idea that certain metro areas may have more connections to \textit{all} other regions. To do so, we select a hub uniformly at random and add links independently and identically distributed with probability $h$ from the hub location to every other location.
	
	\item To avoid the possibility of multiple links between the same two nodes, we remove any duplicate links.
\end{enumerate}

We first take $L=40$ and $m=3500$ for all locations. We set $a=-4$ and $b=-15$. Next, we calibrate the network to data by setting $d_\ell=15.5$, and $d_{RGG}=13.5$ for all locations. Next, we set $c=2$. Finally, we set $h=2.85\times 10^{-6}$. This process results in a graph that very roughly emulates the connectivity of real world networks in the United States and India \cite{mccormick2010many,banerjee2018changes,beaman2018can,banerjee2020messages}.
This includes data from India during the COVID-19 lockdowns about interactions within six feet,
meaning that it is conservative \cite{banerjee2020messages}.

We fix this network to use in all versions of the simulations. The network we generate
is sparse,  clustered, and has small average distances, as demonstrated by information detailed in Table S1.

Finally, we recalculate the connection probability matrix between locations to reflect rates of connection across regions, denoted by $q$. The entry that denotes the probability of linking between locations $\ell$ and $\ell'$ is $q_{\ell, \ell'}$.

\subsection*{Disease Process}
We set parameters as follows: the duration of infection is $\theta=5$, detection delay (when incorporated) is $\tau=3$, and set quarantine thresholds $x$ depending on the simulation.

We set transmission probability $p$ as

$$p =1-\left(1-\frac{R_0}{\bar{d}}\right)^{\frac{1}{\theta}}$$

where $\bar{d}$ is the mean degree.
 We take $R_0 = 3.5$, based on estimates of COVID-19 \cite{hao2020reconstruction}.

Following estimates from the literature (5-15\%), we set $\alpha=0.1$
   \cite{hortaccsu2020estimating,li2020substantial}.
In the simulations, each node is either detected or not during the first period in which it can be detected. Nodes that are detected are classified as such until recovery. Nodes that are undetected remain undetected (and so the $\alpha$ probability of detection is done once in the $\tau+1$st period, and only in that period).

As outlined in the main text, we begin by using $\theta=5$ and $\tau=3$ \cite{lauer2020incubation,hortaccsu2020estimating,li2020substantial, covidlength}.

\subsection*{Simulation Progression}

Each time period in the simulation progresses in four parts, which happen sequentially. The simulations run as follows:

\begin{enumerate}
    \item The policy makers see the newly detected infections from the previous period, and update their estimates of current infections (in all jurisdictions if proactive), and then determine whether a quarantine is necessary in their own jurisdiction in the next period (if there is not one already in place).  This quarantine decision is done based on estimated infections for proactive jurisdictions, and internally observed infections for reactive jurisdictions.
    \item The disease progresses for a period. This includes new infections and recoveries.
    \item Infected nodes that have just finished their detection delay of $\tau$ periods are independently detected with probability $\alpha$.
    \item New quarantines are enacted based on decisions made in step one of the process in this time period. Quarantines that have taken place for $\theta$ periods end.
\end{enumerate}

A node that becomes infected in period $t$ with a detection delay of $\tau$ and
total disease length $\theta$,
is tested in period $t+\tau$, results are processed in $t+\tau+1$,
and they will be quarantined (if necessary) starting at the end of $t+\tau+1$ (under the fourth item above).
This means that they have $\tau+1$ time periods during which they can infect other nodes. For instance, if $\tau=0$ this allows a node that becomes infected (but that was not already under quarantine for other reasons) one
opportunity to infect others.
This process reflects that neither detection nor quarantining of
individuals (or jurisdictions) happens instantaneously.
In addition, we stipulate that the seed node, $i_0$ is not counted in the quarantining
testing and calculations.
This is meant to reflect that it may be unclear whether the disease is spreading or not.
Nodes that are detected are marked as such until recovery.

\subsection*{Containment Policies}

A random node $i_0$ is selected and the epidemic begins there. We study the epidemic curve, the number total node-days of infection, and the number of node-days of quarantine for a variety of containment strategies. In all cases, the policy maker does not detect $i_0$, to emulate the difficulty of detecting an infection seed in real time. 

\subsubsection*{$(k,x)$ Policies}

We examine a number of scenarios using the $(k,x)$ policy model outlined in Theorems 1-3.

If a quarantine fails, and there are infections outside of the quarantine radius, the policy maker deals with each escaped infection individually. The policy maker treats each detected case outside of the initial quarantine as a new seed, and immediately quarantines all nodes with the same radius as the initial quarantine.

We begin pick our threshold for triggering the initial quarantine by using a simple objective function. We minimize a linear combination of the number of infected person periods and quarantined person periods. For all linear combinations where some weight is given to both terms, the optimal threshold is $x=1$. The logic is as follows: if the initial quarantine is successful, the number of quarantined person periods will be fixed and also the minimum number of quarantined person periods. Therefore, the problem reduces to minimizing the number of infections, which is done by setting $x=1$.

We study three versions of a $(k,x)$ policy. First, we simulate $(k,x)=(3,1)$ with no detection delay and no buffer. Then, we
incorporate a detection delay of $\tau=3$, still using a policy of $(k,x)=(3,1)$ and still with no buffer. (We do not include a buffer since the resulting quarantine on our network would encompass 99.98\% of nodes on average, since almost all nodes are within distance 6 of each other.) Lastly, we study a $(3,1)$ policy with enforcement failures and no buffer. In this case, a fraction $\epsilon=0.05$ of nodes do not ever quarantine. 

While the policy maker is unable to detect the infection seed in real time, once the policy maker decides to quarantine, we give them the advantage of perfect information with knowing the location of $i_0$. 

\subsection*{$(k,x)$ Policies with an Unknown Seed}

We also simulate the case where the policy maker is unable to trace back to find the initial seed $i_0$ to use as the center of the quarantine region. In this case, once at least $x$ cases are detected, the policy maker calculates the pairwise distance between the set of all detected nodes. The most central node is defined as the one with the minimum average distance to the other detected nodes. The policy maker then quarantines all nodes within distance $k+1$ of the most central node. If there are multiple nodes with the same average distance, the policy maker picks one at random. If the initial quarantine fails, the policy maker proceeds the same way as when they do know $i_0$, instituting quarantines of radius $k+1$ around detected nodes.

Again, we examine three cases: the first with $(k,x)=(3,1)$ with no detection delay, the second introducing a delay, and the third including enforcement failures. In the third case, a fraction $\epsilon = 0.05$ nodes never quarantine just as with the standard $(k,x)$ policies. Again, we do not include a buffer in any of the simulations as it would result in nearly global quarantines.

\subsubsection*{A Global Quarantine Policy}

In a global quarantine policy, every node is quarantined for $\theta$ periods as soon as at least $x=1$ infections are detected globally. We study this in the case with a detection delay, to compare it to the $(k,x)$, reactive, and proactive policies.

\subsubsection*{Reactive and Proactive Quarantine Policies}

For both the reactive and proactive policies, we take each location on the graph to be a separate jurisdiction.

\paragraph*{Reactive Quarantine Policies.}
Reactive jurisdictions respond only to detected infections within their own borders. We set $x=1$ for all jurisdictions,
the most conservative possible threshold, unless otherwise specified.

\paragraph*{Proactive Quarantine Policies.}
Proactive jurisdictions quarantine based on estimated
infection rates within their own borders, but these estimates account for the history of infections observed in all jurisdictions and knowledge of the network connection rates. In each period, each jurisdiction $\ell$ observes the number of actual detected infections at time $t$, $z_{\ell, t}$, and then calculates their estimated infections $w_\ell$ as follows:
$$w_{\ell,t}=\textrm{max}\{w_{\ell,t-1}+y_{\ell,t}-r_{\ell,t},z_{\ell,t}\},$$
where $y_{\ell,t}$ denotes the number of expected new infections in region $\ell$ at time $t$, given the history of infections observed in all jurisdictions and knowledge of the network connection rates, and $r_{\ell,t}$ denotes the number of expected recoveries in $\ell$ at $t$. The max updates the infection rate upwards if the estimated infection rate is lower than the actual observation.  This is not fully sophisticated, as the adjustment
could also backwardly update previous infection rates in light of the new information, but this would require introducing a probability space and more machinery that might improve the proactive policy's accuracy, but would not qualitatively change the results.

Each jurisdiction calculates $y_{\ell,t}$ as:
$$y_{\ell, t}=p\sum_{\ell' \textrm{ s.t. } \ell' \textrm{ not quarantined at t-1 }} m_{\ell'}q_{\ell,\ell'} w_{\ell', t-1} $$
The summation includes the term for spread from $\ell$ to still within $\ell$.
If $\ell$ is quarantined at time $t$, then $y_{\ell,t}=0$. Expected recovery at each
period $r_{\ell,t}$ is calculated as:
$$r_{\ell,t}=w_{\ell,t-\theta}-w_{\ell,t-\theta-1}+r_{\ell,t-\theta}.$$
Finally, we set $w_{\ell, t}<0.01$ to be zero, to avoid implementation issues
with floating point calculations. Setting a lower value to truncate at would
improve the performance of the proactive jurisdiction policies, as they would
be more sensitive to detected cases in other jurisdictions. We set $w_{\ell, 1}=0$, for all jurisdictions. Thus, the $w_{\ell, t}$ values remain at zero until at least one infection is detected somewhere.

\paragraph*{Uniform and Lax Policies}
We run two simulation variants for both the proactive and reactive policies: one in which all states are as conservative as possible, setting $x=1$ and a second in which
four regions set a higher threshold of $x=5$. In the proactive case, the lax jurisdictions follow a reactive policy in addition to using the higher threshold value. 

We choose $x=5$ to simulate lax thresholds.
In the United States, New York state issued a stay at home order when 0.07\% of the
state population was infected, which scaled to our populations of 3500 that is equivalent to a
threshold of 2.73 \cite{nyorder, covidtracking}.
When scaled to match our population of 3500, Florida began re-opening with a
threshold of 6.15, and some countries never locked down \cite{floridaorder, covidtracking, swedeninfo}.
The quarantines in our stylized model are more aggressive, as they cut contact completely.

\subsection*{Results and Sensitivity Analysis}
We run 10000 simulations with the parameters detailed in the main text: using $\theta=5$, $\tau=3$, $\alpha=0.1$ and $R_0=3.5$. Each simulation begins with a singular infection, selected uniformly at random. In the simulations where there are lax jurisdictions, four of the forty are selected to be lax uniformly at random. 
For all the additional sets of parameters reported below, we run 2500 simulations. 

We include the results of the simulations detailed in the main text in the tables below.
In addition, we run simulations with several sets of varied parameters: first, we take $\alpha=0.05$ and $\alpha =0.2$;
second we take $\theta=8$ and $\tau=5$; finally, we set $R_0$ equal to 2, 5, and 15 while holding all other parameters fixed.
Within the United States, estimates for the detection rate range from 5\% to 15\%,
and in countries with less developed testing infrastructure, the detection rate is
undoubtedly lower \cite{hortaccsu2020estimating}.
Because disease parameters are estimated, we use a different estimate of the disease lifespan of COVID-19 \cite{covidlength}. Full results are shown in Tables S2-S6, and in Figure S2. 

There are two key trends among the single regime policies. While the results from single jurisdiction policies in terms of infection and quarantined person-days are similar, regardless of whether or not the seed is known, knowing the seed node improves the effectiveness of the initial quarantine. This result is consistent with the theory. The similar results in terms of infections and quarantine person days is a result from the overall high effectiveness of the policy maker's response if the initial quarantine fails. Because the policy maker treats every escaped, detected infection as a new seed, no matter how it treated the initial quarantine, the overall results for infection and quarantine person-days are similar.

Second, as shown through the visuals of Figure S2, the effectiveness of the single policymaker policies varies depending on the disease parameters. For larger values of $R_0$, as demonstrated by the cases where $R_0 = 5$ and $R_0=15$, the single policymaker regional quarantine policies perform worse than the proactive, multiple jurisdiction simulations. With high values of $R_0$, the single jurisdiction policies perform better in terms of infected person-days, but have significantly more quarantine person days. This is because with a high $R_0$, precise targeting becomes much more difficult leading to many rounds of ineffective quarantines. In essence, the single jurisdiction is trying its best to halt the spread, but with a regional quarantine fails to get it under control. The multiple jurisdiction setting moves far more slowly, and so does not have the same number of quarantine days (but very similar infections). This is less a product of successful policy, and more a reflection that with $R_{0} = 15$, the only real effective policy is complete global quarantine.

There are several notable points about the reactive and proactive policies. First, the relationship between the reactive and proactive policies is robust to different sets of simulation parameters. In all cases, there is a significant gap between the proactive and reactive jurisdictions, along both the number of quarantine and infection person-days. Second, proactive policies are strictly better in terms of infections, regardless of the disease and administration parameters. Third, the impact of lax jurisdictions on quarantined person-days with reactive jurisdictions depends on the set of parameters. When $\alpha=0.2$, we see that adding lax jurisdictions to reactive policies increases the number of quarantine person-days. However, in all other cases, there are outcomes similar to those described in the main text: because of the high connection rate between jurisdictions, lax jurisdictions serve as super spreaders that cause coincidental large scale shut downs. Finally, lax jurisdictions uniformly increase the number of quarantined person-days for proactive jurisdictions.

\medskip

\section{Supplementary Tables}

\begin{table}[H]
\centering
\caption*{Table S1: Graph Statistics}
\begin{tabular}{rrr}
  \hline
 Property & Value \\
  \hline
Average Degree & 20.49 \\
Average Local Clustering Coefficient & 0.208 \\
Diameter & 9 \\
Average Path Length & 5.33 \\
\bottomrule
\end{tabular}

\caption*{Graph statistics for the graph used in all simulations. Similar to real world networks, it is sparse, clustered and has short average distances between nodes.}
\end{table}

\begin{table}[H]
\centering
\caption*{Table S2: Regional Policy (Known Seed) Simulation Results}
\begin{tabular}{ccccccccc}
    \toprule
     {$R_0$ } & {$\theta$} & {$\tau$} & {$\alpha$}  & {$\epsilon$} & {\shortstack{Percent\\Infected}} &{\shortstack{Infection\\Person Days}} & {\shortstack{Quarantined\\Person Days}} & {\shortstack{Escape\\Rate}}  \\ \midrule
    3.5 & 5  & 0 & 0.1  & 0 & 0.0276 & 1384.05 & 803955.61 & 0.0953 \\
   3.5 & 5  & 3  & 0.1  & 0 & 0.226 & 11282.19 & 2301413.60  & 0.458  \\
   3.5 & 5  & 3  & 0.1   & 0.05& 0.514 &25688.08 & 6478054.64   & 0.551     \\ \midrule
    3.5 &5  & 0   & 0.05    & 0 & 0.0684 & 3421.10 & 11231131.73 & 0.225 \\
   3.5 & 5  & 3  & 0.05  & 0 & 2.81 &  140667.17  &20297075.03   &  0.623  \\
   3.5 & 5  & 3  & 0.05  & 0.05 & 7.80 & 390155.83 & 66067046.93 & 0.706    \\ \midrule
    3.5 & 5 & 0 & 0.2 & 0 &0.0097&483.96&698551.61&0.022\\
     3.5 &5 & 3 & 0.2 & 0 &0.064&3196.86&1024409.97&0.260\\
     3.5 &5 & 3 & 0.2 & 0.05 &0.096&4794.23&2027933.31&0.352\\ \midrule
    3.5 &8  & 0  & 0.1   & 0 & 0.0277 & 2213.92 & 1243574.65  & 0.0904 \\
   3.5 & 8 & 5   & 0.1    & 0 &0.285 & 22834.58 & 4187189.53  & 0.506  \\
   3.5 & 8 & 5  & 0.1   & 0.05 & 0.559 & 44709.41 & 10653981.92   & 0.582 \\ \midrule
    2 & 5 & 0 & 0.1 & 0 & 0.0611&3057.36&737563.73&0.102\\
     2 &5 & 3 & 0.1 & 0 &0.149&7473.47&931141.96&0.226\\
     2 &5 & 3 & 0.1 & 0.05 &0.156&7784.11&1185778.89&0.252\\
    \midrule
     5 &5 & 0 & 0.1 & 0 &0.027&1349.74&800273&0.074 \\
   5 & 5 & 3 & 0.1 & 0 &1.607&80333.14&10826074.49&0.622\\
   5 & 5 & 3 & 0.1 & 0.05 &6.795&339788.66&63996541.26&0.746\\
    \midrule
   15 & 5 & 0 & 0.1 & 0 &0.0849&4245.51&712741.24&0.004\\
   15 & 5 & 3 & 0.1 & 0 &74.615&3730767.57&115033210.96&0.998\\
   15 & 5 & 3 & 0.1 & 0.05 &75.809&3790452.76&187358575.83&1.000\\
    
    \bottomrule
\end{tabular}
\caption*{Results for the parameters used in the main text are the average over 10000 simulations.
Results for the parameters only used in this section are the average over 2500 simulations.
For all simulations, we set $k=3$ and $x=1$. Infection person days and quarantined person days
are scaled to be per million individuals. The escape rate is defined as the frequency with
which the disease escapes the initial quarantine.}
\end{table}

\begin{table}[H]
\centering
\caption*{Table S3: Regional Policy (Unknown Seed) Simulation Results}
\begin{tabular}{ccccccccc}
    \toprule
     {$R_0$ } & {$\theta$} & {$\tau$} & {$\alpha$}  & {$\epsilon$} & {\shortstack{Percent\\Infected}} &{\shortstack{Infection\\Person Days}} & {\shortstack{Quarantined\\Person Days}} & {\shortstack{Escape\\Rate}}  \\ \midrule
    3.5 & 5  & 0 & 0.1  & 0 &0.0181&903.66&907418.73&0.191 \\
   3.5 & 5  & 3  & 0.1  & 0 &0.209&10425.09&2474174.31&0.524  \\
   3.5 & 5  & 3  & 0.1   & 0.05&0.503&25147.99&6733925.78&0.597   \\ \midrule
    3.5 &5  & 0   & 0.05    & 0 &0.052&2602.33&1434418.54&0.349 \\
   3.5 & 5  & 3  & 0.05  & 0 &2.689&134472.07&20800069.91&0.676  \\
   3.5 & 5  & 3  & 0.05  & 0.05 &8.038&401918.94&69666034.17&0.736\\ 
   \midrule
     3.5 &5 & 0 & 0.2 & 0 &0.0086&428.81&748090.71&0.067\\
   3.5 & 5 & 3 & 0.2 & 0 &0.061&3056.09&1141315.7&0.329\\
   3.5 & 5 & 3 & 0.2 & 0.05&0.097&4873.17&2221095.64&0.416\\\midrule
    3.5 &8  & 0  & 0.1   & 0 &0.007&562.29&708870.95&0.09 \\
   3.5 & 8 & 5   & 0.1    & 0 &0.228&18237.83&4193253.6&0.555  \\
   3.5 & 8 & 5  & 0.1   & 0.05 &0.561&44886.81&11767810.56&0.616\\ \midrule
    2 & 5 & 0 & 0.1 & 0 &0.0095&477.09&669783.94&0.117\\
     2 &5 & 3 & 0.1 & 0 &0.029&1427.07&830410.44&0.244\\
     2 &5 & 3 & 0.1 & 0.05 &0.034&1680.93&996141.17&0.263\\
    \midrule
     5 &5 & 0 & 0.1 & 0 &0.024&1202.11&952173.23&0.207\\
   5 & 5 & 3 & 0.1 & 0 &1.807&90371.77&12607529.19&0.718\\
   5 & 5 & 3 & 0.1 & 0.05&7.074&353675.16&66430323.4&0.789\\
    \midrule
   15 & 5 & 0 & 0.1 & 0 &0.1239&6195.7&1248499.49&0.266\\
   15 & 5 & 3 & 0.1 & 0 &73.99&3699513.54&123395786.67&0.998\\
   15 & 5 & 3 & 0.1 & 0.05&75.482&3774094.57&191925875.01&1.000\\
    
    \bottomrule
\end{tabular}
\caption*{Results for the parameters used in the main text are the average over 10000 simulations.
Results for the parameters only used in this section are the average over 2500 simulations.
For all simulations, we set $k=3$ and $x=1$. Infection person days and quarantined person days
are scaled to be per million individuals. The escape rate is defined as the frequency with
which the disease escapes the initial quarantine.}
\end{table}

\begin{table}[H]
    \centering
    \caption*{Table S4: Global Policy Simulation Results}
    \begin{tabular}{ccccccc}
    \toprule
        {$R_0$} & {$\theta$} & {$\tau$} & {$\alpha$} &{\shortstack{Percent\\Infected}} &{\shortstack{Infection\\Person Days}} & {\shortstack{Quarantined\\Person Days}} \\ \midrule
        3.5& 5 &  3 & 0.1 &0.0778&3890.51&4730000.00 \\
        3.5&5 &  3 & 0.05 &0.1518&7591.79&4702000.00 \\
        3.5 & 5 & 3 & 0.2  &0.044&2199.23&4716000.00 \\
        3.5& 8 &  5 & 0.1 &0.0847&6772.11&7545600.00 \\
        2& 5 &  3 & 0.1 &0.0207&1035.57&3708000.00\\
        5& 5 &  3 & 0.1 &0.1937&9685.63&4922000.00\\
        15& 5 &  3 & 0.1 &5.1639&258192.87&5000000.00\\
    \bottomrule
    \end{tabular}
    \caption*{Results for the parameters used in the main text are the average over 10000 simulations.
    Results for the parameters only used in this section are the average over 2500 simulations.
    Infection person days and quarantined person days are scaled to be per million individuals. There are fewer quarantined person days on average with $\alpha=0.05$, rather than $\alpha=0.1 $ as there is a greater chance of the disease going completely undetected before dying out. }
\end{table}

\begin{table}[H]
    \centering
    \caption*{Table S5: Reactive and Proactive Policy Simulation Results}
    \begin{tabular}{cccccccc}
    \toprule
         {Policy} &{$R_0$} &{$\theta$} & {$\tau$} & {$\alpha$} & \shortstack{Percent\\Infected} &{\shortstack{Infection\\Person Days}} & {\shortstack{Quarantined\\Person Days}} \\ \midrule
        Reactive &3.5& 5 &  3 & 0.1 &29.89&1494552.99&131303637.50\\
        Proactive &3.5& 5 &  3 & 0.1 &1.71&85526.78&51328755.00\\ 
        \midrule
        Reactive &3.5& 5 &  3 & 0.05 &44.91&2245276.69&132165200.00\\
        Proactive &3.5& 5 & 3 & 0.05 &5.45&272265.56&73938850.00 \\ 
        \midrule
        Reactive &3.5& 5 & 3 & 0.2 &10.63&531685.96&59927450.00\\
        Proactive &3.5& 5 & 3 & 0.2 &0.56&27845.97&33333750.00\\
        \midrule
        Reactive &3.5& 8 & 5 & 0.1 &27.66&2213177.60&194385520.00\\
        Proactive &3.5& 8 & 5 & 0.1 &2.50&200369.76&35172320.00\\
        \midrule
        Reactive &2& 5 & 3 & 0.1 &2.03&101275.14&13037500.00\\
        Proactive &2& 5 & 3 & 0.1 &0.20&10104.91&5300750.00\\
        \midrule
        Reactive &5& 5 & 3 & 0.1 &50.74&2537078.07&157829850.00\\
        Proactive &5& 5 & 3 & 0.1 &4.14&206837.40&38021550.00\\
        \midrule
        Reactive &15& 5 & 3 & 0.1 &81.66&4082954.97&16475250.00\\
        Proactive &15& 5 & 3 & 0.1 &71.10&3555090.86&12125500.00\\
    \bottomrule
    \end{tabular}
    \caption*{Results for the parameters used in the main text are the average over 10000 simulations.
    Results for the parameters only used in this section are the average over 2500 simulations.
    For all simulations, every jurisdiction sets $x=1$. Infection person days and quarantined person
    days are scaled to be per million individuals. }
\end{table}

\begin{table}[H]
    \centering
    \caption*{Table S6: Reactive and Proactive Policies with Lax Jurisdictions Simulation Results}
    \begin{tabular}{ccccccccc}
    \toprule
         {Policy} &{$R_0$}&{$\theta$} & {$\tau$} & {$\alpha$} &\shortstack{Percent\\Infected} &{\shortstack{Infection\\Person Days}} & {\shortstack{Quarantined\\Person Days}}  & \shortstack{Low Threshold\\Case Fraction}\\ \midrule
        Reactive & 3.5& 5 &  3 & 0.1 &34.06&1702933.48&123051062.50&0.853\\
        Proactive &3.5& 5 &  3 & 0.1 &9.19&459435.15&87241700.00&0.724\\ 
        \midrule
        Reactive &3.5& 5 &  3 & 0.05 &46.16&2308090.83&111742450.00&0.867\\
        Proactive &3.5& 5 & 3 & 0.05 &21.65&1082529.07&133660250.00&0.757\\ 
        \midrule
        Reactive &3.5& 5 & 3 & 0.2 &19.71&985572.76&101214350.00&0.846\\
        Proactive &3.5& 5 & 3 & 0.2 &1.93&96453.34&36203450.00&0.748 \\ 
        \midrule
        Reactive &3.5& 8 & 5 & 0.1 &32.78&2622535.7&192823440.00&0.852\\
        Proactive &3.5& 8 & 5 & 0.1 &11.99&958939.79&138729440.00&0.761
        \\\midrule
        Reactive &2& 5 & 3 & 0.1 &6.76&337947.53&34866850.00&0.857\\
        Proactive &2& 5 & 3 & 0.1 &1.62&81059.37&19606550.00&0.810\\
        \midrule
        Reactive &5& 5 & 3 & 0.1 &53.23&2661322.79&139446200.00&0.863\\
        Proactive &5& 5 & 3 & 0.1 &10.67&533748.29&69943200.00&0.717\\
        \midrule
        Reactive &15& 5 & 3 & 0.1 &83.63&4181626.60&15395600.00&0.885\\
        Proactive &15& 5 & 3 & 0.1 &73.66&3683186.66&10949990.00&0.871\\
    \bottomrule
    \end{tabular}
    \caption*{Results for the parameters used in the main text are the average over 10000 simulations.
    Results for the parameters only used in this section are the average over 2500 simulations.
    For all simulations, 36 jurisdictions set $x=1$ and the remained set $x=5$.
    In the proactive case, jurisdictions with $x=5$ follow  reactive policies.
    Infection person days and quarantined person days are scaled to be per million individuals.}
\end{table}

\bigskip

\section{Supplementary Figures}
\begin{figure}[H]
	     \textbf{Figure S1: Growth Balance}\par\medskip
	\centering
	\subfloat[Regional Policy Fails]{ \includegraphics[scale=0.15]{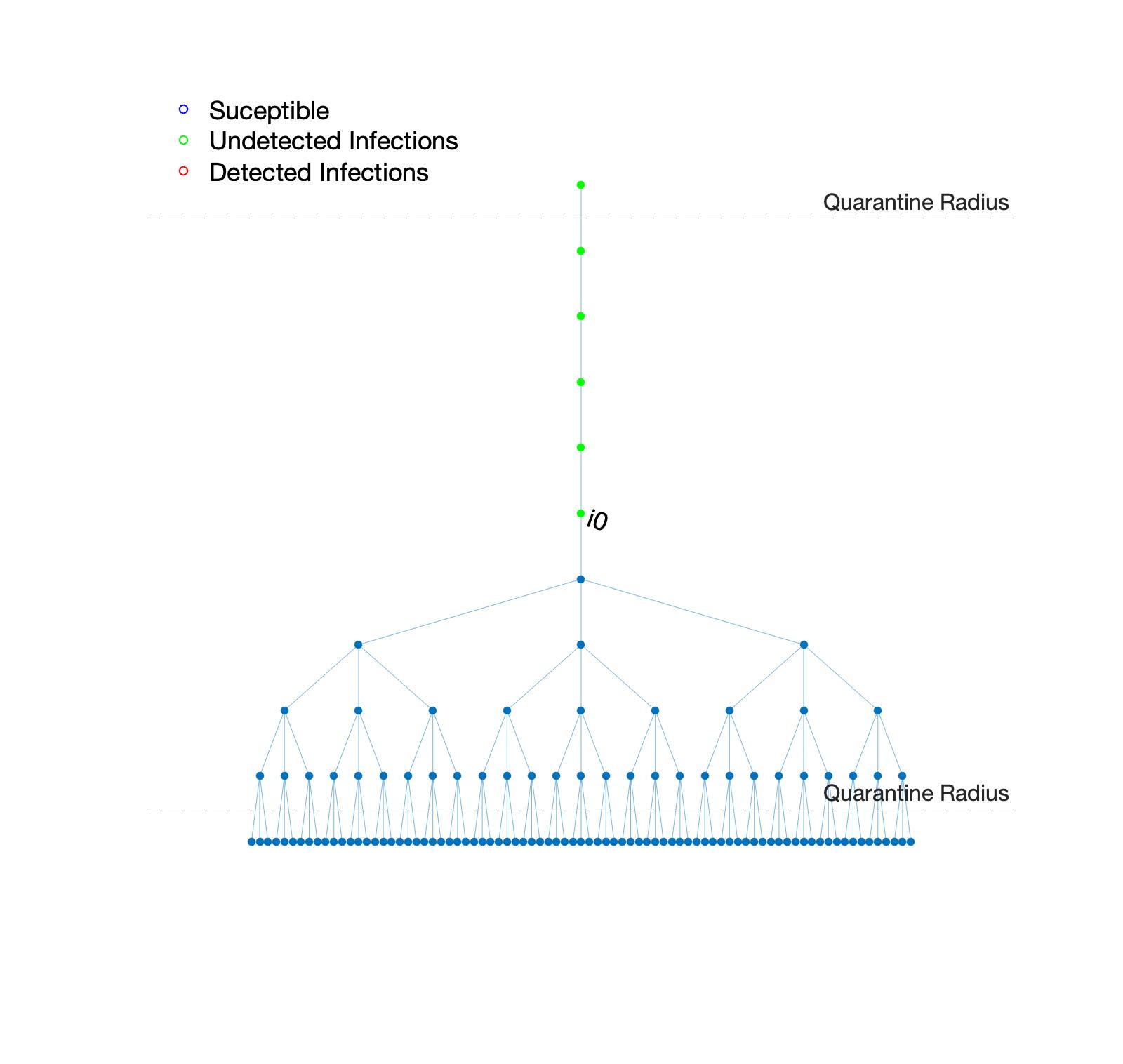}}
	\subfloat[Regional Policy Succeeds]{ \includegraphics[scale=0.15]{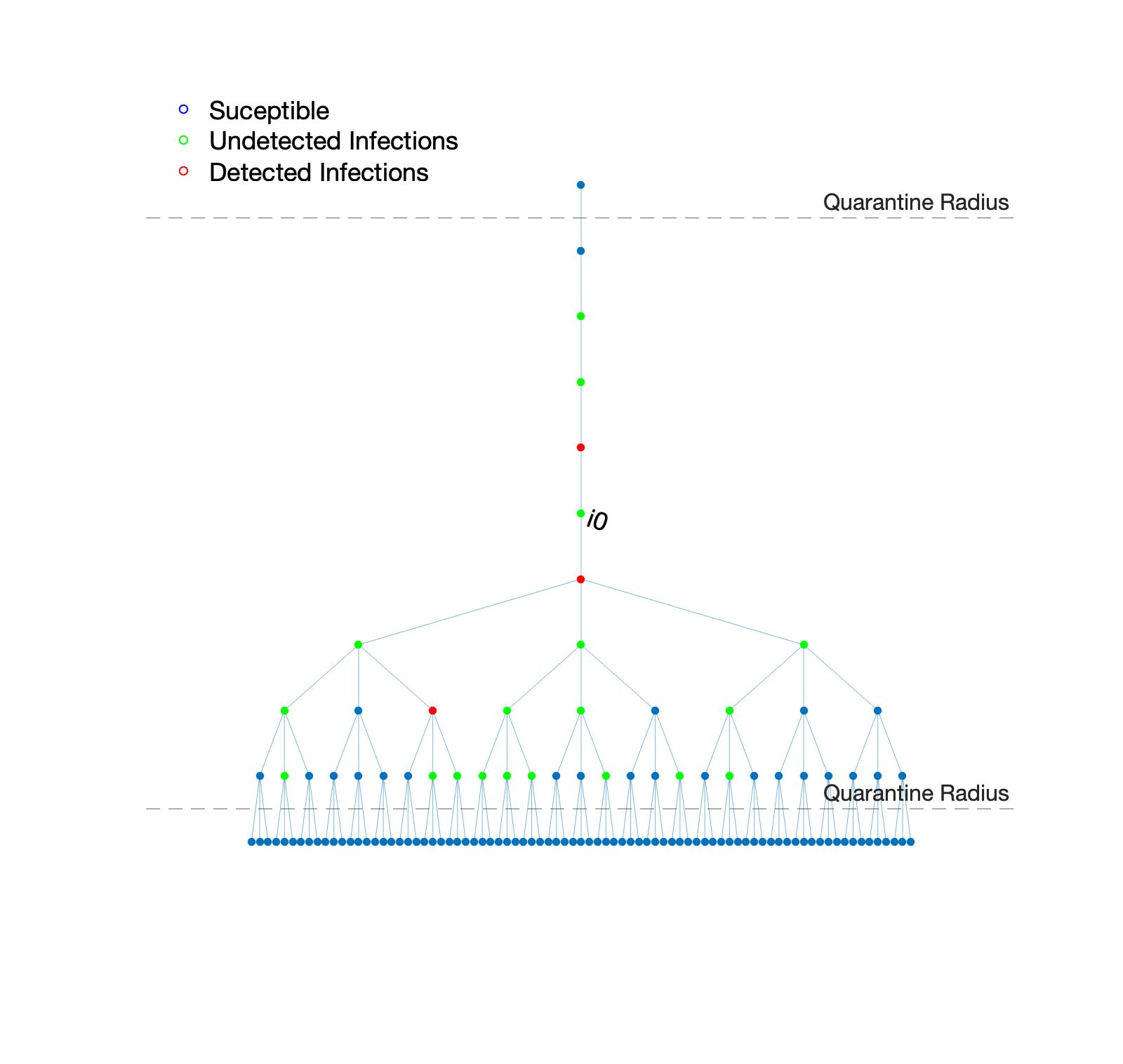}}
	\caption*{\textbf{Figure S1}: \small{Panel (a) demonstrates the possible failure of growth-balance.  The infection escapes up the line undetected beyond the quarantine radius.  If the infection happens to spread downwards, as in Panel (b), it is much more likely to be detected.  However, that only happens with some moderate probability in this network, and so growth balance fails.
	}}
	\label{fig:growthbalance}
\end{figure}

\setcounter{subfigure}{0}

\begin{figure}[H]
 \textbf{Figure S2: Impact and Costs of Quarantines with Different Simulation Parameters}\par\medskip

    \centering 
    \subfloat[]{\includegraphics[width=0.45\linewidth]{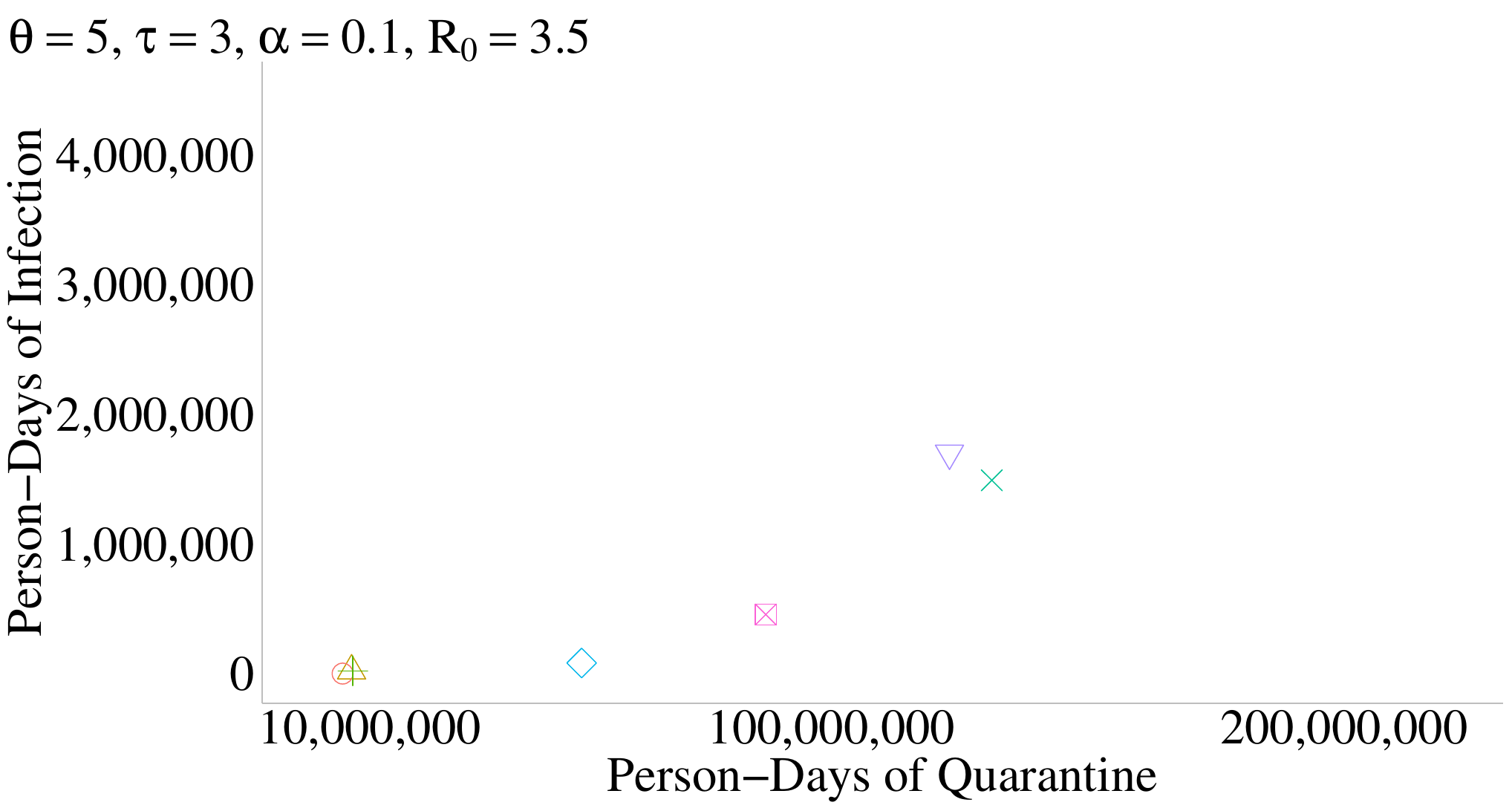}
    }
    \hfill
\subfloat[]{
  \includegraphics[width=0.45\linewidth]{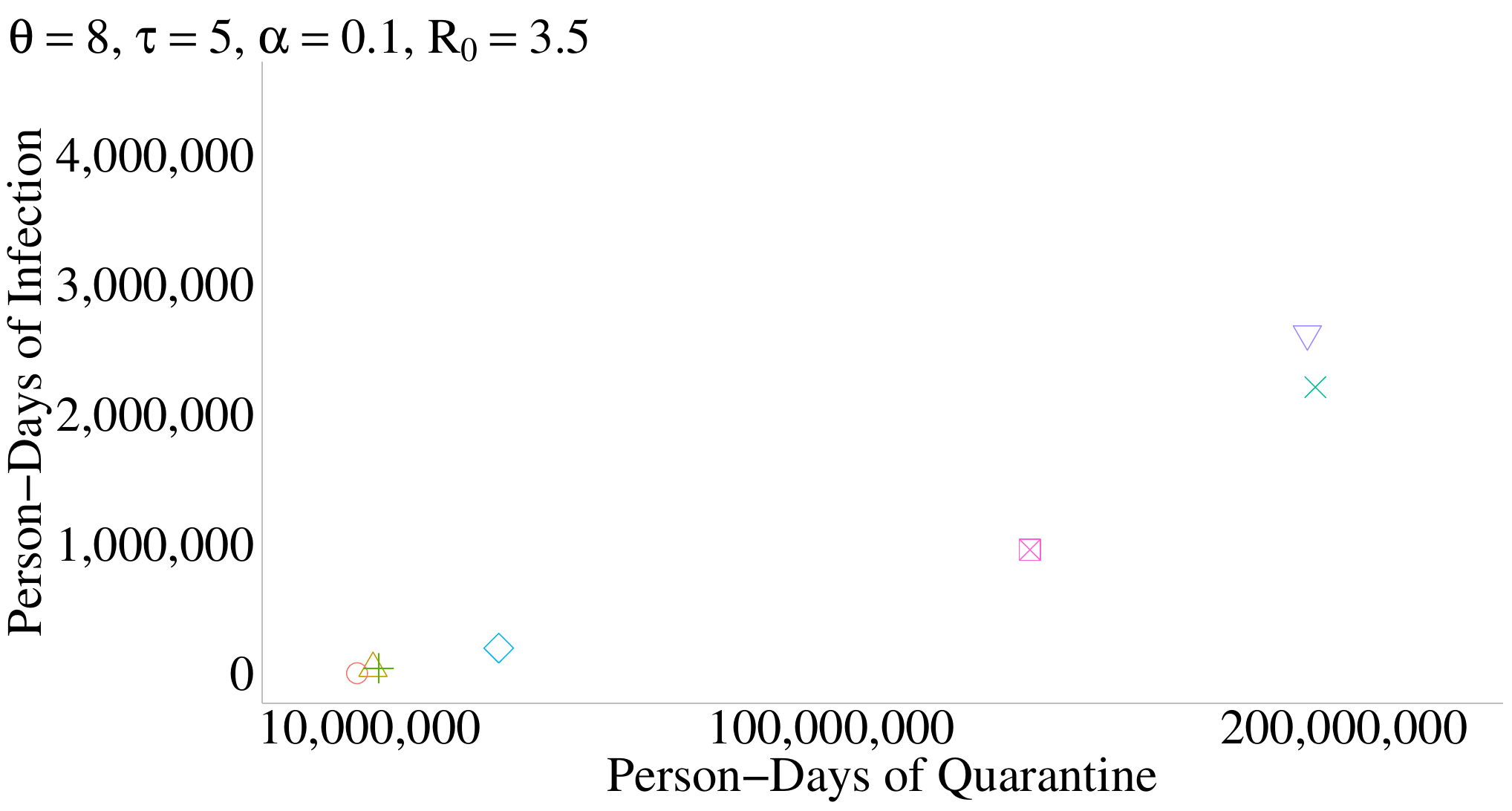}
  } \\
 \subfloat[]{
  \includegraphics[width = 0.45\linewidth]{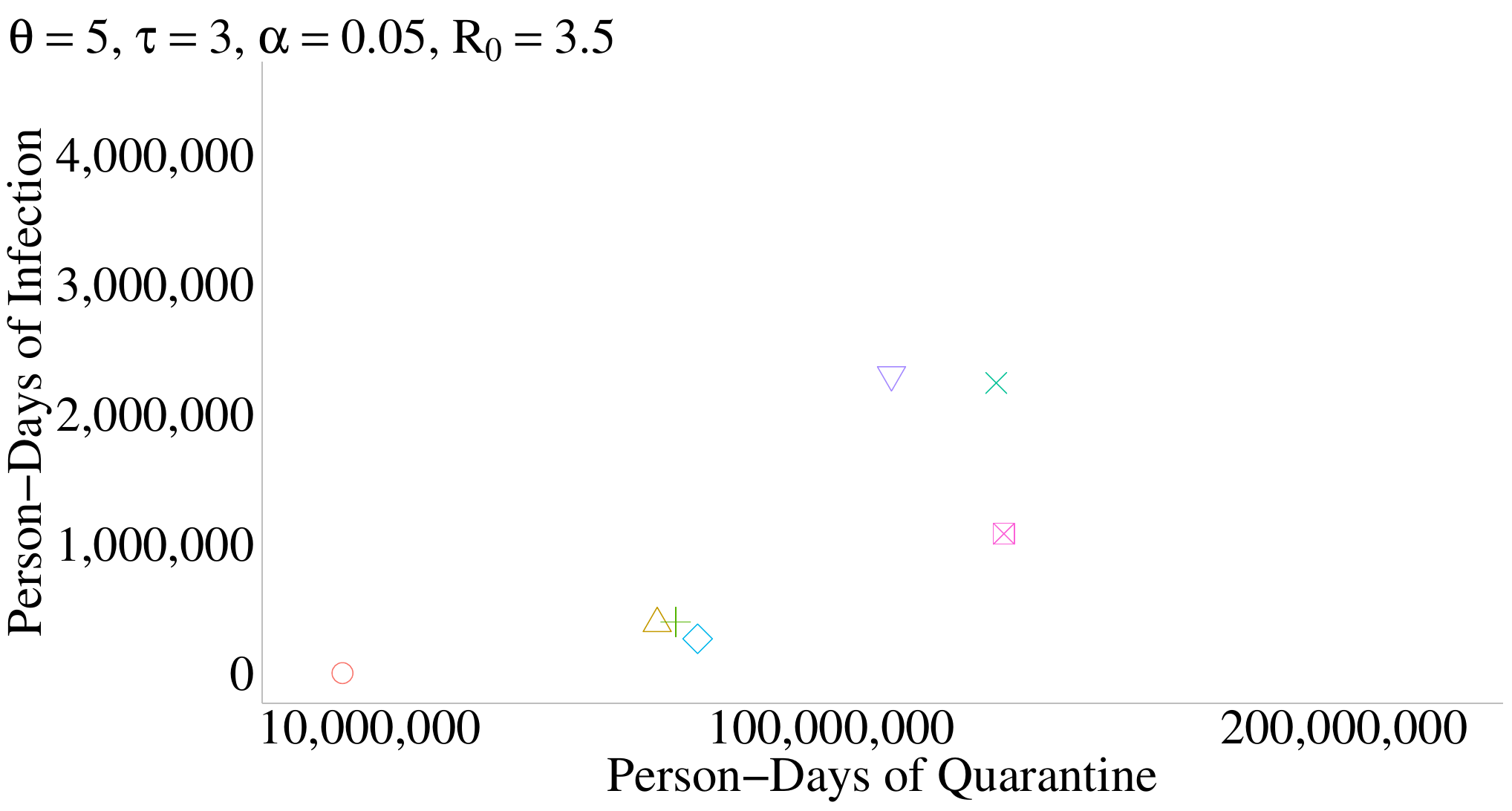}}
\hfill
\subfloat[]{
  \includegraphics[width=0.45\linewidth]{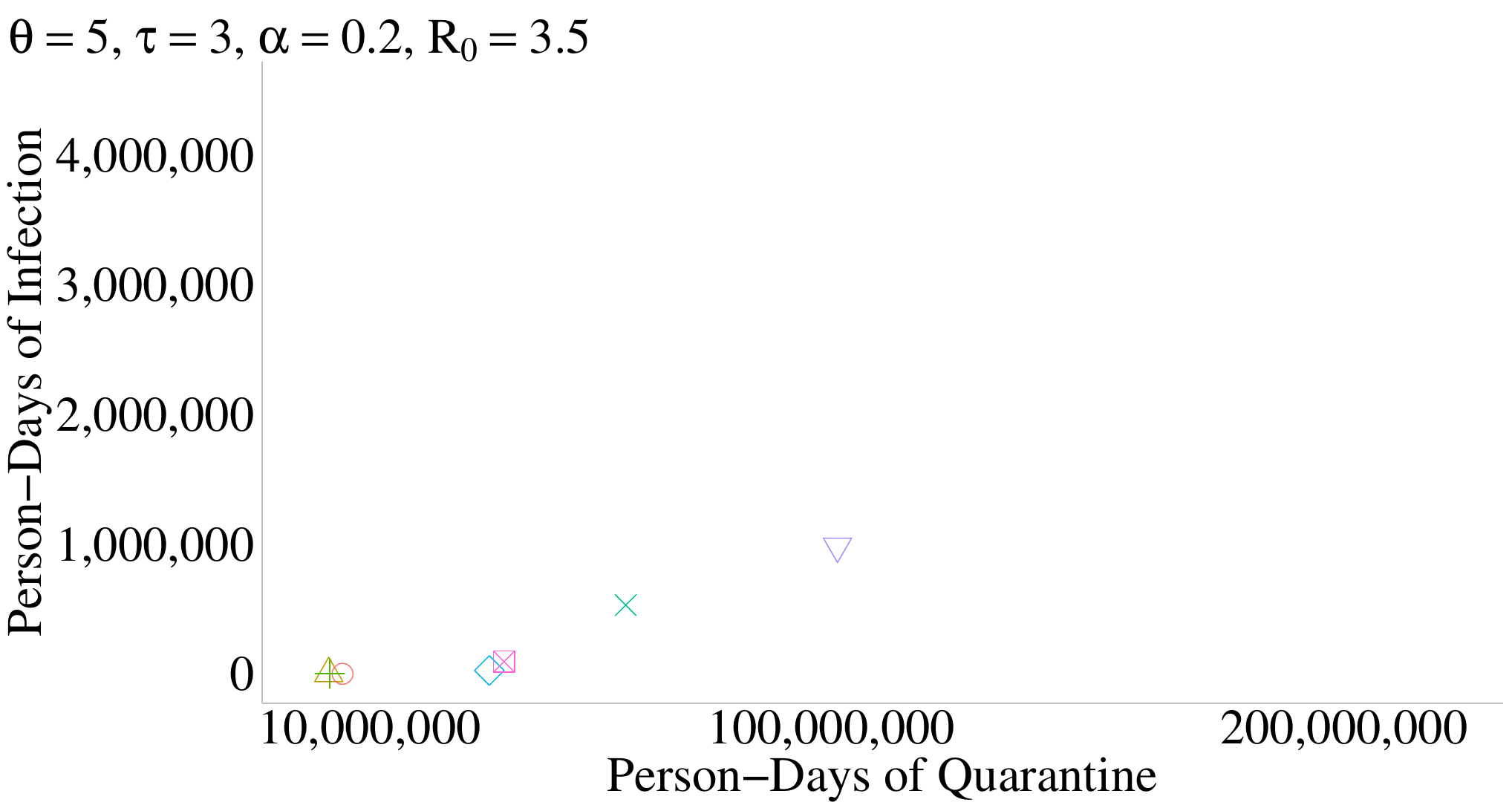}}
\\ \medskip
\subfloat[]{
  \includegraphics[width=0.45\linewidth]{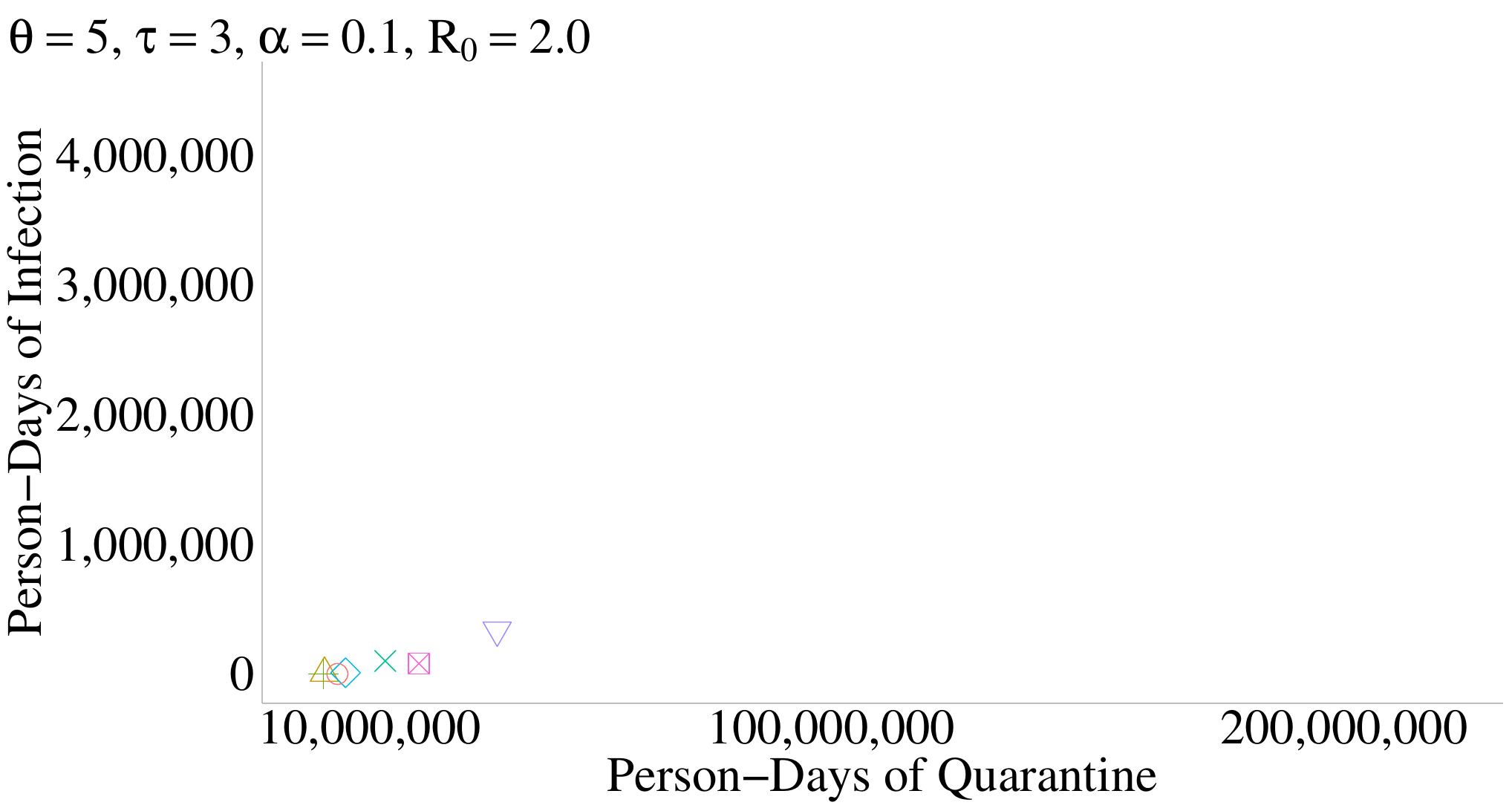}}
\hfil 
\subfloat[]{
  \includegraphics[width=0.45\linewidth]{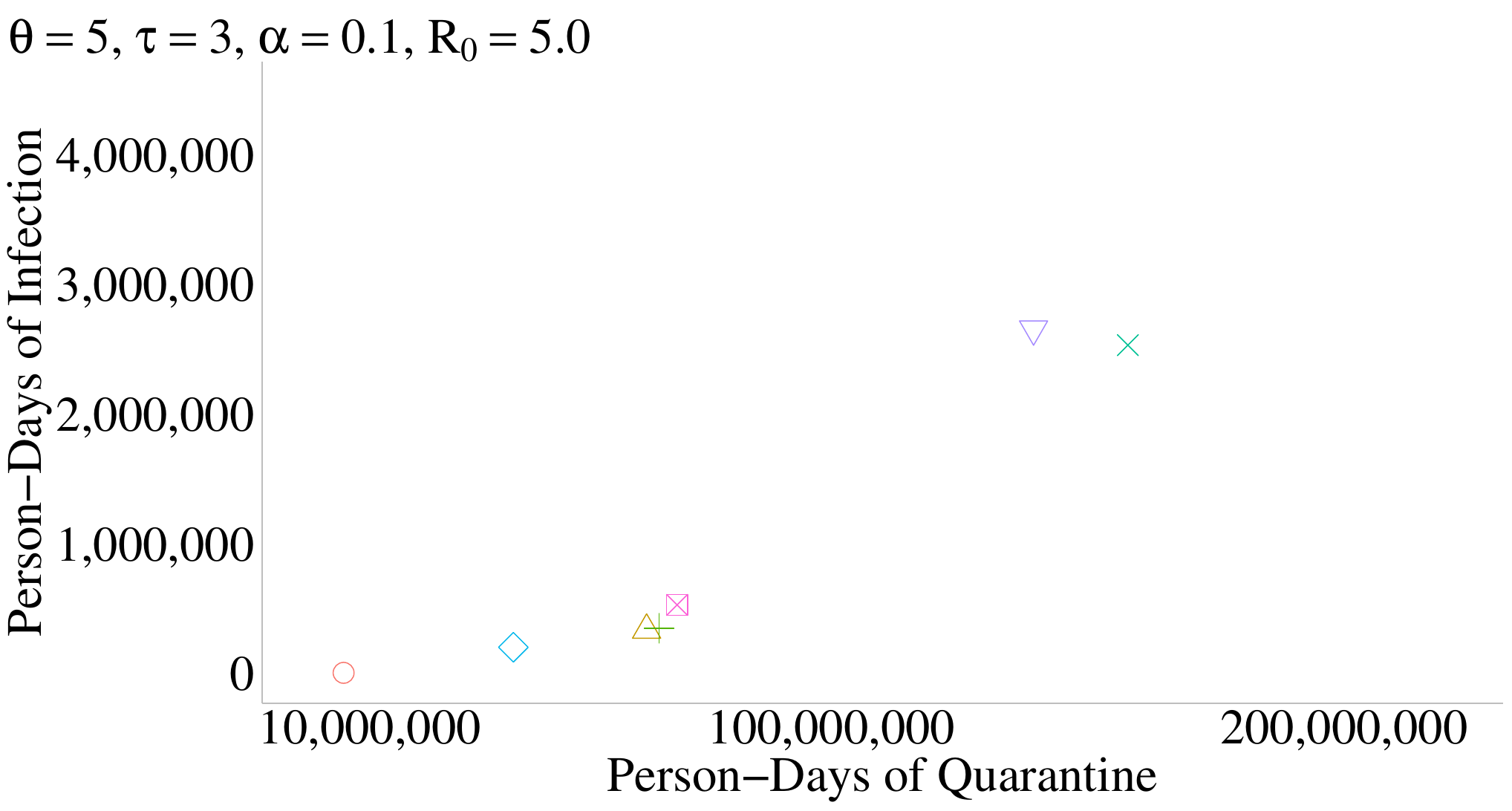}}
\\ \medskip
\subfloat[]{
  \includegraphics[width=0.45\linewidth]{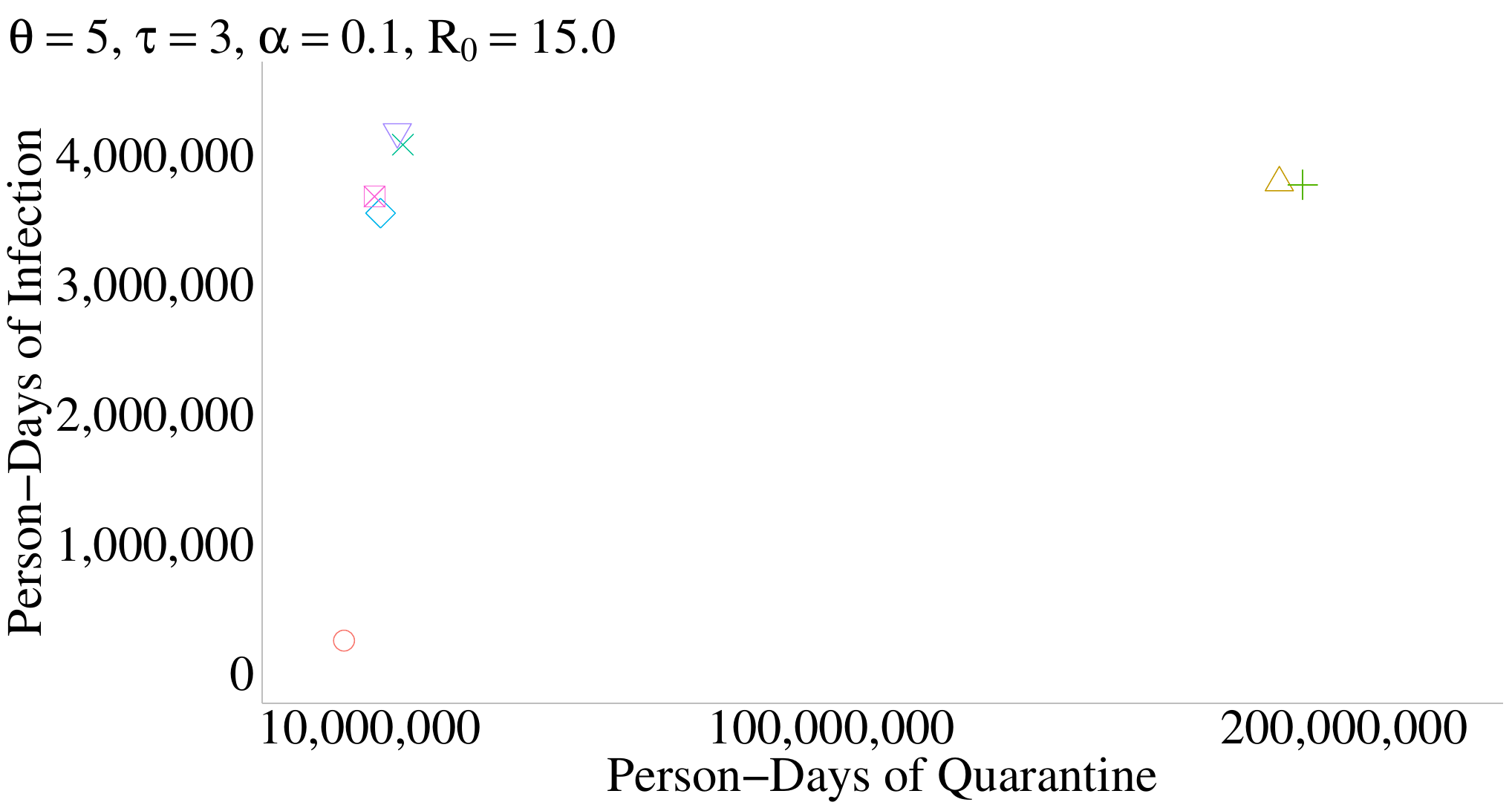}}
 \hfill
\subfloat{
 \includegraphics[width=0.45\linewidth]{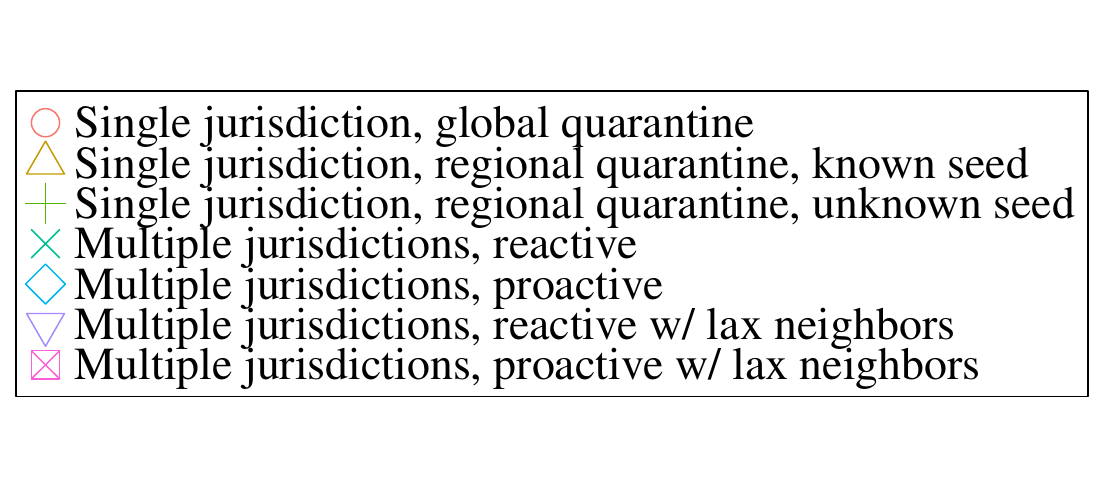}
}
\caption*{\label{fig:images}This figure plots the number of person-days of quarantine (per million) and the number of person-days of infection (per million) for seven different policy scenarios. Both single jurisdiction policies include leakage. Simulations in (a) are the average from 10,000 simulations, while in (b)-(f), the results are the average over 2,500 simulations.}

\end{figure}

\newpage

\end{document}